\documentclass[runningheads,envcountsame,envcountsect,orivec]{llncs}
\usepackage[T1]{fontenc}
\usepackage{graphicx}
\usepackage{bbding} 
\usepackage{amssymb}
\usepackage{thm-restate}
\usepackage{ifthen}
\usepackage{url}
\usepackage{xcolor}
\usepackage{cleveref}
\usepackage{eclbkbox}

\usepackage{lineno}

\usepackage{latexsym}
\usepackage{amsfonts}
\usepackage{ifthen}

\def\cC{\mathcal{C}}
\def\cD{\mathcal{D}}
\def\cE{\mathcal{E}}

\def\cG{\mathcal{G}}
\def\cH{\mathcal{H}}
\def\cI{\mathcal{I}}
\def\cJ{\mathcal{J}}

\def\cR{\mathcal{R}}
\def\cS{\mathcal{S}}

\def\cV{\mathcal{V}}

\newcommand{\sort}[2][]{\ifthenelse{\equal{#1}{}}{\mbox{\sl{#2}}}{\mbox{\sl#1{#2}}}}

\newcommand{\SortBV}[1][]{\ifthenelse{\equal{#1}{}}{\sort{bv}}{\sort{bv}_{#1}}}
\newcommand{\sortBV}[1][]{\ifthenelse{\equal{#1}{}}{\sort[\scriptsize]{bv}}{\sort[\scriptsize]{bv}_{#1}}}

\newcommand{\symb}[1]{\mathsf{#1}}

\newcommand{\Pos}{\mathcal{P}os}
\newcommand{\Var}{\mathcal{V}ar}

\newcommand{\Int}{\mathbb{Z}}

\newcommand{\Dom}{\mathcal{D}om}
\newcommand{\Ran}{\mathcal{R}an}

\newcommand{\mgu}[1][]{\ifthenelse{\equal{#1}{}}{\mathit{mgu}}{\mathit{mgu}_{#1}}}

\newcommand{\cStheory}[1][\mathit{theory}]{\ifthenelse{\equal{#1}{}}{\cS}{\cS_{#1}}}
\newcommand{\cScore}{\cStheory[\mathit{core}]}

\newcommand{\cSterm}{\cS_{\mathit{term}}}

\newcommand{\Sigtheory}[1][\mathit{theory}]{\ifthenelse{\equal{#1}{}}{\Sigma}{\Sigma_{#1}}}
\newcommand{\Sigcore}{\Sigtheory[\mathit{core}]}
\newcommand{\Sigint}{\Sigtheory[{\sort[\scriptsize]{int}}]}

\newcommand{\Sigterm}{\Sigma_{\mathit{terms}}}
\newcommand{\Val}[1][]{%
\ifthenelse{\equal{#1}{}}{%
{\mathcal{V}al}%
}{%
{\mathcal{V}al}_{#1}%
}}

\newcommand{\Universe}{\mathcal{U}}
\newcommand{\Eval}[2][\cJ]{%
\ifthenelse{\equal{#1}{}}{%
{[\![ #2 ]\!]}
}{%
{[\![ #2 ]\!]}_{#1}
}}

\newcommand{\Constraint}[1]{[ #1 ]}
\newcommand{\CTerm}[2]{#1 \, \Constraint{#2}}

\newcommand{\LVar}{\mathcal{LV}ar}
\newcommand{\cRcalc}{\cR_{\mathit{calc}}}

\newcommand{\toBase}[1][base]{\to_{\mathsf{#1}}}

\newcommand{\NF}[1][]{\ifthenelse{\equal{#1}{}}{\mathit{NF}}{\mathit{NF\!}_{#1}}}
\newcommand{\GNF}[1][]{\ifthenelse{\equal{#1}{}}{\mathit{GNF}}{\mathit{GNF\!}_{#1}}}

\newcommand{\Rule}[2]{#1 \to #2}
\newcommand{\CRule}[3]{#1 \to #2 ~ \Constraint{#3}}

\newcommand{\simcto}{\stackrel{\smash{\raisebox{-.5mm}{\tiny $\sim$~\,}}}{\to}}
\newcommand{\cto}{\rightharpoonup}
\newcommand{\ctoBase}[1][\cR]{\to_{\mathsf{base},#1}}

\newcommand{\Eqn}[3][\approx]{#2 #1 #3}

\newcommand{\CEqn}[4][\approx]{\Eqn[#1]{#2}{#3}~\Constraint{#4}}

\newcommand{\ETerm}[3]{\exists #1: #2.\ #3}
\newcommand{\ETermBra}[3]{(\ETerm{#1}{#2}{#3})}
\newcommand{\CETerm}[4]{\ETermBra{#1}{#2}{#3} ~ \Constraint{#4}}

\newcommand{\EEqn}[5][\approx]{\exists #2: #3.\ #4 #1 #5}
\newcommand{\EEqnBra}[5][\approx]{(\EEqn[#1]{#2}{#3}{#4}{#5})}
\newcommand{\CEEqn}[6][\approx]{\EEqnBra[#1]{#2}{#3}{#4}{#5}~\Constraint{#6}}

\newcommand{\GInst}[2][]{\ifthenelse{\equal{#1}{}}{\ifthenelse{\equal{#2}{}}{\cG}{\cG(#2)}}{\ifthenelse{\equal{#2}{}}{\cG_{#1}}{\cG_{#1}(#2)}}}
\newcommand{\GInstNF}[1]{\ifthenelse{\equal{#1}{}}{\cG_{\mathrm{NF}}}{\cG_{\mathrm{NF}}(#1)}}

\newcommand{\RIstate}[2]{(#1, #2)}
\newcommand{\InferenceRule}[2]{\frac{~\displaystyle #1 ~}{~\displaystyle #2 ~}}
\newcommand{\Expd}[3][\cR]{\mathit{Expd}_{#1}(#2, ~ #3)}
\newcommand{\RIstep}[1][\mathrm{RI}]{\vdash_{#1}}
\def\Expansion{\textsc{Expansion}}
\def\Simplification{\textsc{Simplification}}
\def\CaseSplitting{\textsc{Case}}
\def\Deletion{\textsc{Deletion}}

\def\cRtoy{\cR_1}
\def\cRtrs{\cR_2}
\def\cRpow{\cR_3}

\renewcommand{\qed}{\hfill$\Box$}

\definecolor{mygray}{gray}{0.8}
\newcommand{\Comment}[4][mygray]{\ifthenelse{\equal{#3}{}}{}{\textcolor{#1}{#3}}\textcolor{#2}{#4}}

\begin{document}
\title{%
Rewriting Induction for Existentially Quantified Equations in
Logically Constrained Rewriting (Full Version)%
\thanks{This work was partially supported by 
Denso Corporation, NSITEXE,
JSPS KAKENHI Grant Number JP24K02900, 
Grant-in-Aid for JSPS Fellows Grant Number JP24KJ1240,
and
the Kayamori Foundation of Informational Science Advancement.
}%
\fnmsep%
\thanks{This work was mainly done as the master's study of the second author who graduated in March 2020.}
}
%
\titlerunning{Rewriting Induction for Existentially Quantified Equations in LCTRSs}
%
\author{%
Naoki Nishida%
\Envelope%
\and
Kazushi Nishie%
\and
Misaki Kojima%
\Envelope%
}
\authorrunning{N. Nishida, K. Nishie, and M. Kojima}
%
\institute{%
Graduate School of Informatics,
Nagoya University, \\
Furo-cho, Chikusa-ku, 
4648601 
Nagoya, Japan\\
\email{nishida@i.nagoya-u.ac.jp}
\quad
\email{kojima@i.nagoya-u.ac.jp}
}
\maketitle              
\begin{abstract}
Rewriting Induction (RI) is a principle to prove that an equation over terms is an inductive theorem of a rewrite system, i.e., that any ground instance of the equation is a theorem of the rewrite system.
RI has been adapted to several kinds of rewrite systems, and RI for constrained rewrite systems has been extended to inequalities.
In this paper, we extend RI for constrained equations to existentially quantified equations in logically constrained rewriting.
To this end, we first extend constrained equations by introducing existential quantification to the equation part of constrained equations.
Then, in applying a constrained rewrite rule to such extended constrained equations, we introduce existential quantification to extra variables of the applied rule.
Finally, using the extended application of constrained rewrite rules, we extend RI for constrained equations to existentially quantified equations.

\keywords{%
Constrained rewriting
\and
Equivalence verification 
\and 
Inductive theorem
\and
Theorem proving
}
\end{abstract}

\section{Introduction}
\label{sec:intro}

In the field of term rewriting, automated reasoning about \emph{inductive theorems} has been well investigated.
Here, an inductive theorem of a term rewrite system (TRS, for short) $\cR$ is an equation $\Eqn{s}{t}$ that is inductively valid, i.e., such that all of its ground instances $\Eqn{s\sigma}{t\sigma}$ are \emph{theorems} of $\cR$---$s\sigma\leftrightarrow_\cR^* t\sigma$.
As principles for proving inductive theorems, we cite \emph{inductionless induction}~\cite{Mus80,Hue82} and \emph{rewriting induction} (RI, for short)~\cite{Red90}, both of which are called \emph{implicit induction principles}.
Frameworks based on the RI principle (RI frameworks, for short) consist of inference rules to prove that given equations are inductive theorems.
RI-based methods are procedures within RI frameworks to apply inference rules under specified strategies or interactively.

In the last two decades, various RI-based methods for \emph{constrained rewriting}
have been developed~\cite{BJ08,SNSSK09,FK12,KN14aplas,FKN17tocl}.
Constrained rewrite systems
have built-in semantics for some function and predicate symbols
and can be computation models of not only functional but also imperative programs~\cite{FK08,FGPSF09,FNSKS08b,FK09,Vid12,KN14aplas,FKN17tocl,KNM25jlamp,CLB23}.
For this reason, it is worth developing verification techniques by means of constrained rewrite systems~\cite{FKN17tocl,WM18,CL18,NW18vstte,CLB23}.
The recent trend of constrained rewrite systems is the formalism of \emph{logically constrained term rewrite systems} (LCTRSs, for short)~\cite{KN13frocos}.
LCTRSs are used for several verification methods to ensure, e.g., 
termination~\cite{Kop13,FGK25}, 
equivalence~\cite{KN14aplas,FKN17tocl}, 
reachability~\cite{CL18,KN23jlamp},
runtime errors~\cite{KN23jlamp,KN23padl,KN24rp},
confluence~\cite{SM23,SMM24},
total correctness~\cite{BC18},
etc.
RI has been extended to LCTRSs~\cite{KN14aplas,FKN17tocl} and an LCTRS tool for RI-based equivalence verification of constrained equations has been developed~\cite{KN15lpar}.

To revisit the RI framework for LCTRSs, let us start with a toy example.
\begin{example}
\label{ex:toy}
Let us consider the following LCTRS over the integer signature:
\[
\cRtoy =
\{~~
\symb{f}(x,y) \to x + z ~ \Constraint{\, x > y \land z \geq x \,}, 
\quad
\symb{f}(x,y) \to x ~ \Constraint{\, x \leq y \,} 
~~~\}
\]
The constrained equation $\CEqn{\symb{f}(x,0)}{x+x}{x \geq 1}$ is an inductive theorem of $\cRtoy$ because
for any positive integer $n \geq 1$, 
we have that $\symb{f}(n,0) \to_{\cRtoy}^* n+n$ .
\end{example}

To prove a constrained equation $\CEqn{s}{t}{\phi}$ to be an inductive theorem of an LCTRS $\cR$,
we have to show that
$s\sigma \mathrel{\leftrightarrow_\cR^*} t\sigma$
for any ground substitution %
$\sigma$ satisfying $\phi$.
To be more precise, for each ground substitution $\sigma$, 
it suffices to show the existence of a sequence
$s\sigma \leftrightarrow_\cR u_1 \leftrightarrow_\cR \cdots \leftrightarrow_\cR u_n \leftrightarrow_\cR t\sigma$.
However, for $\CEqn{s}{t}{\phi}$, the inference rules {\Expansion} and {\CaseSplitting} of RI make an exhaustive case analysis based on some position of $s$ or $t$ by applying rewrite rules in $\cR$.
{\Expansion} makes an exhaustive case analysis by reducing the equation by all applicable rules at $p$ and by adding $\CRule{s}{t}{\phi}$ ($\CRule{t}{s}{\phi}$) to the set $\cH$ of constrained rewrite rules performing as \emph{induction hypotheses}, where 
$\cR\cup\cH\cup\{ \CRule{s}{t}{\phi} \}$ ($\cR\cup\cH\cup\{ \CRule{t}{s}{\phi} \}$) is terminating.
{\CaseSplitting} performs similarly to {\Expansion}, while it does not add any rules to $\cH$.
The application of {\Expansion} and {\CaseSplitting} is not complete in general.
\begin{example}
Let us consider $\cRtoy$ and $\CEqn{\symb{f}(x,0)}{x+x}{x \geq 1}$ in \Cref{ex:toy} again.
The root position of $\symb{f}(x,0)$ in the equation is \emph{reduction-complete} w.r.t.\ $\cRtoy$---all ground 
    instances are redexes of $\cRtoy$---and thus {\Expansion} and {\CaseSplitting} are applicable to $\CEqn{\symb{f}(x,0)}{x+x}{x \geq 1}$.
Both {\Expansion} and {\CaseSplitting} can convert 
the constrained equation to the following one:
\[
\CEqn{x + z}{x + x}{x \geq 1 \land z \geq x}
\]
The above equation is not an inductive theorem, because, e.g., for 
$\sigma = \{ x \mapsto 1, z \mapsto 2 \}$ satisfying $x \geq 1 \land z \geq x$, we have that 
$(x+z)\sigma \to_{\cRtoy}^* 3$
and
$(x+x)\sigma \to_{\cRtoy}^* 2$.
Note that $\cRtoy$ implicitly includes the \emph{calculation rule} $\CRule{x_1+x_2}{x}{x = x_1+x_2}$.
Note also that this problem does not originate from constrained rewriting.%
\footnote{%
The equation $\Eqn{\symb{f}(x)}{\symb{a}}$ is an inductive theorem of the TRS
$\cRtrs=\{
\Rule{\symb{f}(\symb{z})}{\symb{a}}, \,
\Rule{\symb{f}(\symb{z})}{\symb{b}}, \,
\Rule{\symb{f}(\symb{s}(x))}{\symb{a}}
\}$, 
not all equations 
$\Eqn{\symb{a}}{\symb{a}}, 
\Eqn{\symb{b}}{\symb{a}}$
obtained from $\Eqn{\symb{f}(x)}{\symb{a}}$ by applying {\Expansion} or {\CaseSplitting} are inductive theorems of $\cRtrs$.
}
For each positive integer $n \geq 1$,
the term $\symb{f}(n,0)$ has infinitely many reducts $n+k$ with $k\geq n$,
but for the equation being an inductive theorem, it suffices that
one of the reducts reaches $n+n$ by means of $\leftrightarrow_{\cRtoy}$.
On the other hand, {\Expansion} and {\CaseSplitting} generate 
the subgoal $\CEqn{x+z}{x+x}{x \geq 1 \land z \geq x}$ and, to show it an inductive theorem, we have to consider \emph{all} substitutions satisfying $x \geq 1 \land z \geq x$.
What we expect for the application of {\Expansion} and {\CaseSplitting} is to convert the constrained equation to 
the following problem:
\begin{quotation}\noindent
for any ground substitution $\sigma$ for $x$ with $x\sigma \geq 1$,
    is there \emph{some} $k$ such that $k \geq x\sigma$ and $(x\sigma + k) \leftrightarrow_{\cRtoy}^* (x\sigma + x\sigma)$.
\end{quotation}
To formulate the above problem as a constrained equation, 
the introduction of \emph{existential quantification} to constrained equations would be reasonable:
\[
(\exists z: z \geq x.\ x + z \approx x + x) ~ \Constraint{x \geq 1}
\]
\end{example}
Note that {\Simplification} has the similar problem:
if we use the \emph{most general constrained rewriting}~\cite{TSNA25ppdp} which is usually implemented in RI-based tools for LCTRSs, then we have the same problem;
otherwise, in applying a constrained rewrite rule to a constrained term, we have to choose appropriate values and/or variables for the extra variables (e.g., choose $x$ for $z$ above) of the applied rule.

In this paper, we introduce existential quantification to the equation part of constrained equations, formulating \emph{constrained existentially quantified equations} (constrained $\exists$-equations, for short) of the form
\[
\CEEqn{\vec{x}}{\eta}{s}{t}{\phi}
\]
in the LCTRS formalism (\Cref{sec:E-equations}).
Then, we extend the RI framework for LCTRSs to constrained $\exists$-equations (\Cref{sec:RI-for-constrained-E-equation}).
To simplify discussions, we restrict ourselves to left-value-free LCTRSs that are left-linear w.r.t.\ theory variables (\Cref{sec:restriction-to-LCTRSs}).
All omitted proofs of the claims are provided in both the appendix. 

In the extended RI framework, when we apply a constrained rewrite rule $\CRule{\ell}{r}{\varphi}$ to a constrained $\exists$-equation $\CEEqn{\vec{x}}{\eta}{s}{t}{\phi}$, we introduce existential quantification to the \emph{extra variables}---variables appear in $r$ but not in $\ell$---of the applied rule:
Roughly speaking, the constrained $\exists$-equation is reduced by the rule at a position $p$ of $s$ by means of a substitution $\gamma$ to
\[
\CEEqn{\vec{x},\vec{z}}{\eta \land \varphi\gamma}{s[r\gamma]_p}{t}{\phi}
\]
where $\vec{z}$ are the extra variables of the applied rule. 
While equations are existentially quantified, we do not extend constrained rewrite rules in LCTRSs.
In orienting a constrained $\exists$-equation $\CEEqn{\vec{x}}{\eta}{s}{t}{\phi}$ from left to right, 
we convert it to $\CRule{s}{t}{\phi \land \eta}$.

The key of extending RI to constrained $\exists$-equations must be the inference rule {\Deletion}.
In the RI framework for LCTRSs, {\Deletion} removes constrained equations $\CEqn{s}{t}{\phi}$ from the subgoals if
$s = t$ or $\phi$ is unsatisfiable.
Note that such equations are trivially inductive theorems.
In the extended RI framework, {\Deletion} removes constrained equations $\CEEqn{\vec{x}}{\eta}{s}{t}{\phi}$ from the subgoals if
$s = t$, $\phi$ is unsatisfiable, or
$s,t$ are theory terms such that $(\phi \Rightarrow (\exists \vec{x}.\ (\eta \land s = t))$ is valid.
Let $s,t$ be theory terms and $\gamma$ a substitution satisfying $\phi$, the domain of which does not include $\vec{x}$.
To use the third new condition,
we do not show validity but satisfiability of $s\gamma = t\gamma$.
This difference has a greater impact than we might expect.


A constrained object $\CTerm{P(\vec{y})}{\phi}$ represents the set of instances of $P(\vec{y})$ w.r.t.\ $\phi$:
$
\{ P(\vec{v}) \mid \mbox{$\vec{v}$ satisfy $\phi$} \}
$.
We consider it the conjunction of $P(\vec{v})$: $\bigwedge_{\mbox{\scriptsize $\vec{v}$ satisfy $\phi$}} P(\vec{v})$.
On the other hand, a constrained $\exists$-object $\CETerm{\vec{x}}{\eta}{P(\vec{y},\vec{x})}{\phi}$ represents 
the family of the sets of instances of $P(\vec{y},\vec{x})$ w.r.t.\ $\phi$:
$
\{\,
\{ P(\vec{v},\vec{w}) \mid \mbox{$\vec{v},\vec{w}$ satisfy $\eta$} \} 
\mid \mbox{$\vec{v}$ satisfy $\phi$}
\,\}
$.
The inner set can be considered the disjunction of $P(\vec{v},\vec{w})$, and
the whole as the conjunction of the disjunctions: $\bigwedge_{\mbox{\scriptsize $\vec{v}$ satisfy $\phi$}} \bigvee_{\mbox{\scriptsize $\vec{v},\vec{w}$ satisfy $\eta$}} P(\vec{v},\vec{w})$.
Viewed in this light, constrained $\exists$-objects are very similar to CNFs.

We have used a simple example (\Cref{ex:toy}) to clearly illustrate the key problem.
On the other hand, we have a practical motivation of introducing existential quantification to constrained equations, which is the reduction of constrained inequalities to constrained equations.

The RI framework for constrained TRSs in~\cite{SNSSK09} has been extended for constrained \emph{inequalities}~\cite{NN18scp}.
In the following, we describe the extended RI framework that are informally and naively adapted to LCTRSs.

Let us consider the following convergent (i.e., terminating and confluent) LCTRS over the integer signature (cf.\ \Cref{ex:pow} in \Cref{sec:preliminaries} for details):
\[
\cRpow =
\{~~
\CRule{\symb{pow}(x,y)}{1}{y < 1}, 
\quad
\CRule{\symb{pow}(x,y)}{x \times \symb{pow}(x,y-1)}{y \geq 1} 
~~\}
\]
The function symbol $\symb{pow}$ is defined by $\cRpow$ as the power function over the integers: for integers $m,n$, $\symb{pow}(m,n) \mathrel{\to_{\cRpow}^*} m^n$.
Let us try to prove that the following inequality
is an inductive theorem of $\cRpow$:
\begin{equation}
\label{eqn:pow-2-n-geq-n}
\CEqn[\geq]{\symb{pow}(2,n)}{n}{n \geq 0}
\end{equation}
In proving validity of constrained inequalities by means of the RI framework in~\cite{NN18scp},
the application of {\Simplification} w.r.t.\ oriented inequalities is in general \emph{not} complete (see \cite[Example~6.1]{NN18scp} for details).
If we found a subgoal is not an inductive theorem, then we backtrack and try to another application of inference rules.
Note that the method in~\cite{NN18scp} succeeds in proving the inequality~(\ref{eqn:pow-2-n-geq-n}) an inductive theorem of $\cRpow$.

In the case of RI frameworks for equations, the application of {\Simplification} is always complete because rules and induction hypotheses are oriented equations.
In~\cite[Section~1]{NN18scp}, to reduce inequalities to equations, the introduction of a binary symbol for $\geq$ is discussed.
For example, $(\ref{eqn:pow-2-n-geq-n})$ is reduced to
the constrained equation
$\CEqn{\symb{geq}(\symb{pow}(2,n),n)}{\symb{true}}{n \geq 0}$,
where rules $\CRule{\symb{geq}(x,y)}{\symb{true}}{x \geq y}$ and $\CRule{\symb{geq}(x,y)}{\symb{false}}{x \not\geq y}$ are introduced.
This approach has not been adopted in~\cite{NN18scp} because $\symb{geq}$ prevents us from simplifying proper subterms in the left-hand side in an expected way.

A naive way in first-order logic to reduce inequalities to equations is the use of \emph{existential quantification}.
For example, inequality $(\ref{eqn:pow-2-n-geq-n})$ is reduced to the following formula:
\begin{equation}
\label{eqn:naive-idea}
\CEEqn{m}{m \geq 0}{\symb{pow}(2,n)}{n+m}{n \geq 0}
\end{equation}
Unfortunately, RI frameworks cannot deal with such an equation. 
%
Our goal is to extend the RI framework for LCTRSs to constrained $\exists$-equations so as to prove that the constrained $\exists$-equation~(\ref{eqn:naive-idea}) above is an inductive theorem of $\cRpow$.

The contribution of this paper is 
to combine the RI framework for inequalities with that for equations by reducing inequalities to equations.
Inequalities are useful to verify some properties of functions;
for example, the returned values are always greater than or equal to an expression over input arguments, and the returned number of computation steps is always less than or equal to an estimated upper bound.
The RI framework for inequalities needs some special treatments of (oriented) inequalities, and {\Simplification} is in general not complete.
The simple combination of the two frameworks such that both inequalities and equations can be handled is more complicated than the framework for equations, and still has incompleteness of {\Simplification}.
On the other hand, the RI framework for constrained $\exists$-equations is a conservative extension of that for constrained equations, {\Simplification} is complete, and we can focus on only constrained ($\exists$-)equations.

\section{Preliminaries}
\label{sec:preliminaries}

In this section, we briefly recall LCTRSs~\cite{KN13frocos,FKN17tocl}.
Familiarity with basic notions and notations on term rewriting~\cite{BN98,Ohl02} is assumed. 

To define an LCTRS~\cite{KN13frocos,FKN17tocl}  over an $\cS$-sorted signature $\Sigma$, we consider the following sorts, signatures, mappings, and constants:
\emph{Theory sorts} in $\cStheory$ and \emph{term sorts} in $\cSterm$ such that $\cS=\cStheory \uplus \cSterm$;
a \emph{theory signature} $\Sigtheory$ and a \emph{term signature} $\Sigterm$ such that $\Sigma = \Sigtheory \cup \Sigterm$ and 
$\iota_1,\ldots,\iota_n,\iota\in \cStheory$ for any symbol $f: \iota_1 \times \cdots \times \iota_n \to \iota \in \Sigtheory$;
a mapping $\cI$ that assigns to each theory sort $\iota$ a (non-empty) set $\Universe_\iota$, so-called the universe of $\iota$ (i.e., $\cI(\iota) = \Universe_\iota$);
a mapping $\cJ$, so-called an interpretation for $\Sigtheory$, that assigns to each function symbol $f : \iota_1 \times \cdots \times \iota_n \to \iota \in \Sigtheory$ a function $f^\cJ$ in 
$\cI(\iota_1) \times \cdots \times \cI(\iota_n) \to \cI(\iota)$ (i.e., $\cJ(f)=f^\cJ$);
a set $\Val[\iota] \subseteq \Sigtheory$ of \emph{value-constants} $a: \iota$ for each theory sort $\iota$ 
such that $\cJ$ gives a bijection from $\Val[\iota]$ to $\cI(\iota)$.
We denote $\bigcup_{\iota \in \cStheory} \Val[\iota]$ by $\Val$.
Note that $\Val \subseteq \Sigtheory$.
For readability, we may not distinguish $\Val[\iota]$ and $\cI(\iota)$, i.e., for each $v \in \Val[\iota]$, $v$ and $\cJ(v)$ may be identified, calling them \emph{values}.
We require that $\Sigterm \cap \Sigtheory \subseteq \Val$.%
\footnote{%
In this paper, we may assume that $\Sigterm \cap \Sigtheory = \emptyset$.
}
Symbols in ${\Sigtheory} \setminus {\Val}$ are \emph{calculation symbols}, for which we may use infix notation.
A term in $T(\Sigtheory,\cV)$ is called a \emph{theory term}. 
We define the \emph{interpretation} $\Eval[\cJ]{\,\cdot\,}$ of ground theory terms as $\Eval[\cJ]{ f(s_1,\ldots,s_n) } = \cJ(f)(\Eval[\cJ]{ s_1 },\ldots,\Eval[\cJ]{ s_n })$.

We typically choose a theory signature $\cStheory$ such that 
$\cStheory \supseteq \cScore=\{\sort{bool}\}$,
$\Val[\scriptsize\sort{bool}]=\{ \symb{true}, \symb{false}: \sort{bool} \}$,
$\Sigtheory \supseteq \Sigcore=
\Val[\scriptsize\sort{bool}] \cup \{{\land}, {\vee}, {\Rightarrow}, {\Leftrightarrow} : \sort{bool} \times \sort{bool} \to \sort{bool}, ~\neg: \sort{bool} \to \sort{bool} \} \cup \{ {=_\iota}, {\ne_\iota} : \iota \times \iota \to \sort{bool} \mid 
\iota \in \cStheory
\}$, 
		and
$\cJ$ interprets these symbols as expected. 
We omit the sort subscripts $\iota$ from $=_\iota$ and $\neq_\iota$ when they are clear from the context.
A theory term with sort $bool$ is called a \emph{constraint}.
A substitution $\gamma$ is said to \emph{respect} a constraint $\phi$ if $x\gamma \in \Val$ for all $x \in \Var(\phi)$ and $\Eval[\cJ]{\phi\gamma} = \symb{true}$.
A constraint $\phi$ is said to be \emph{valid} (\emph{satisfiable}) if $\Eval{\phi\gamma} = \symb{true}$ for any (some) substitution $\gamma$ such that $\Dom(\gamma) \supseteq \Var(\phi)$ and $\Ran(\gamma|_{\Var(\phi)}) \subseteq \Val$.
We often consider validity of first-order formulas of the form $\phi \Rightarrow (\exists \vec{x}.\ \psi)$ which is not a constraint in our setting, where $\phi$ and $\psi$ are constraints (i.e., theory terms).
We define validity as follows:
$\phi \Rightarrow (\exists \vec{x}.\ \psi)$ is valid 
if and only if for any substitution $\gamma$ with $\Dom(\gamma)=\Var(\phi)\cup(\Var(\psi)\setminus\{\vec{x}\})$, there exists a substitution $\gamma'$ such that 
$\Dom(\gamma') = \{\vec{x}\}$,
$\Ran(\gamma') \subseteq \Val$,
and
$\Eval{\phi\gamma \Rightarrow \psi\gamma'\gamma} = \symb{true}$.
Note that validity of $(\exists \vec{x}.\ \phi) \Leftrightarrow (\exists \vec{y}.\ \psi)$ is defined by
the validity of $\phi \Rightarrow (\exists \vec{y}.\ \psi)$ and $\psi \Rightarrow (\exists \vec{x}.\ \phi)$, where $\phi$ and $\psi$ are constraints.

A \emph{constrained rewrite rule} is a triple $ \ell \to r ~[\varphi]$ such that $\ell$ and $r$ are terms of the same sort, $\varphi$ is a constraint, and 
$\ell$ is neither a theory term nor a variable.
If $\varphi = \symb{true}$, then we may write $\ell \to r$.
We define $\LVar(\ell \to r ~ [\varphi])$ as $\Var(\varphi) \cup (\Var(r) \setminus \Var(\ell))$,
the set of \emph{logical variables in $\ell \to r ~ [\varphi]$} which are variables instantiated with values in rewriting terms.
We say that a substitution $\gamma$ \emph{respects} $\ell \to r ~ [\varphi]$ if 
$\gamma(x) \in \Val$ for all logical variables $x \in \LVar(\ell \to r ~ [\varphi])$
and $\Eval[\cJ]{\varphi\gamma} = \symb{true}$.%
\footnote{Not all substitutions $\gamma$ respecting $\varphi$ respect $\ell \to r ~ [\varphi]$:
Such $\gamma$ ensures that $\gamma(x) \in \Val$ for all variables $x \in \Var(\varphi)$
but not that $\gamma(x) \in \Val$ for all $x \in \Var(r)\setminus \Var(\varphi)$.
}
We denote the set $\{ f(x_1,\ldots,x_n) \to y ~ [y = f(x_1,\ldots,x_n)] \mid f 
\in {{\Sigtheory} \setminus {\Val}}, ~
\mbox{$x_1,\ldots,x_n,y$ are pairwise distinct variables}
\}$ by $\cRcalc$.%
\footnote{Note that $\cRcalc$ is not determined by $\cR$ but by $\Sigtheory$.}
The elements of $\cRcalc$ are also called constrained rewrite rules (or \emph{calculation rules}) even though their left-hand sides are theory terms.

A constrained rewrite rule $\CRule{\ell}{r}{\varphi}$ is said to be \emph{left-value-free} 
if $\ell$ includes no value, i.e., $\ell \in T(\Sigma\setminus\Val,\cV)$~\cite{Kop17,TSNA25lopstr,TSNA25ppdp}.
The constrained rewrite rule $\CRule{\ell}{r}{\varphi}$ is said to be \emph{left-linear w.r.t.\ logical variables} (left-LV-linear, for short) if every logical variable appears in $\ell$ at most once~\cite{TSNA25lopstr,TSNA25ppdp}.
The constrained rewrite rule $\CRule{\ell}{r}{\varphi}$ is said to be \emph{left-linear w.r.t.\ theory variables} (left-TV-linear, for short) if every variable with a theory sort appears in $\ell$ at most once.
Note that any left-TV-linear constrained rule is left-LV-linear.

The \emph{rewrite relation} $\to_{\cR}$ of a set $\cR$ of constrained rewrite rules is a binary relation over terms, defined as follows:
For terms $s,t$,
$s \mathrel{\to_\cR} t$ if and only if 
there exist a constrained rewrite rule $\ell \to r ~ [\varphi] \in \cR \cup \cRcalc$,
a position $p$ of $s$, and a substitution $\gamma$
such that 
$s|_p = \ell\gamma$,
$t = s[r\gamma]_p$,
and
$\gamma$ respects $\ell \to r ~ [\varphi]$.
We may say that the reduction \emph{occurs at position $p$} and may write $\to_{p,\cR}$ or $\to_{p,\ell \to r ~[\varphi]}$ instead of $\to_\cR$.
A reduction step with $\cRcalc$ is called a \emph{calculation}.
A \emph{logically constrained term rewrite system} (LCTRS, for short) is defined as an abstract reduction system $(T(\Sigma,\cV),{\to_\cR})$, simply denoted by $\cR$. 
An LCTRS is usually given by supplying $\Sigtheory$, $\Sigterm$, $\cR$, and an informal description of $\cI$ and $\cJ$ if these are not clear from the context.
 The set of \emph{normal forms} of $\cR$ is denoted by $\NF_\cR$.
 The set of \emph{ground} normal forms of $\cR$ is denoted by $\GNF_\cR$.
The LCTRS $\cR$ is said to be \emph{terminating} if the rewrite relation $\to_\cR$ is terminating.

The \emph{standard integer signature} $\Sigint$ is $\Sigcore \cup \{ +, -,\times,\symb{exp},\symb{div},\symb{mod} : \sort{int} \times \sort{int} \to \sort{int} \} \cup \{ {\geq}, {>} : \sort{int} \times \sort{int} \to \sort{bool} \} \cup {\Val[\scriptsize\sort{int}]}$ where 
$\cStheory \supseteq \{\sort{int},\sort{bool}\}$,
$\Val[\scriptsize\sort{int}] = 
\mathbb{Z} 
$, $\cI(\sort{int}) = \mathbb{Z}$, and 
$\cJ(n) = n$ 
for any $n\in\mathbb{Z}$. 
We define $\cJ$ in a natural way.
An LCTRS 
with $\Sigtheory = \Sigint$ is called an \emph{integer LCTRS}.

\begin{example}
\label{ex:pow}
Let $\cS = \{ \sort{int},\sort{bool} \}$, and
$\Sigma = \Sigterm \cup \Sigtheory$, where
$
\Sigterm = \{~ \symb{pow} : \sort{int} \times \sort{int} \to \sort{int} ~\}$.
Then, both $\sort{int}$ and $\sort{bool}$ are theory sorts.
Examples of theory terms are $0 = 0+\symb{-1}$ and
$x+3 \geq y + -42$ which are constraints.
$5+9$ is also a (ground) theory term, but not a constraint.
$\symb{pow}(2,y)$ is not a theory term.
We reduce $3-1$ to $2$ in one step with the calculation rule $\CRule{x_1-x_2}{x}{x = x_1-x_2}$, and $3 \times (2 \times (1 \times 1))$ to $6$ in three steps. 
To implement an LCTRS calculating the \emph{power} function over $\Int$, we use the signature $\Sigma$ above 
 and 
 the LCTRS $\cRpow$ in \Cref{sec:intro}.
Using the constrained rewrite rules in $\cRpow$, $\symb{pow}(2,3)$ reduces in ten steps to $8$:
$
	\symb{pow}(2,3) 
	\mathrel{\to_{\cRpow}}
	 2 \times \symb{pow}(2,3-1)
	\mathrel{\to_{\cRpow}}
	 2 \times \symb{pow}(2,2)
	\mathrel{\to_{\cRpow}}
	\cdots 
	\mathrel{\to_{\cRpow}}
	 8
$.
\end{example}

A function symbol $f$ is called a \emph{defined symbol} of an LCTRS $\cR$ if there exists a rule $f(\ell_1,\ldots,\ell_n) \to r ~ [\varphi] \in \cR \cup \cRcalc$;
non-defined elements of $\Sigma$ are called \emph{constructors} of $\cR$.
Note that all values are constructors of $\cR$.
Note also that not all constructors are constants.
We denote the sets of defined symbols and constructors of $\cR$ by $\cD_\cR$ and $\cC_\cR$, respectively:
$\cD_\cR = \{ f \mid f(\ldots) \to r ~ [\varphi] \in \cR \cup \cRcalc \}$
and
$\cC_\cR = \Sigma \setminus \cD_\cR$.
A term in $T(\cC_\cR,\cV)$ 
is called a \emph{constructor term} of $\cR$.
A term of the form $f(t_1,\ldots,t_n)$ with $f\in\cD_\cR$ and constructor terms $t_1,\ldots,t_n$ is called \emph{basic}. 
The LCTRS $\cR$ is called \emph{quasi-reducible} if every ground basic term is a redex of $\cR$.
A method to prove quasi-reducibility of left-linear LCTRSs can be found in~\cite{NKN25frocos}.
When considering quasi-reducibility, we usually assume that for each sort $\iota$, there exists a constant with sort $\iota$.
The LCTRS $\cR$ is called \emph{sufficiently complete} if every ground term can be reduced to a ground constructor term.
Note that a terminating and quasi-reducible LCTRS is sufficiently complete.



A \emph{constrained term} is a pair $\CTerm{s}{\phi}$ of a term $s$ 
and a constraint $\phi$ 
with possibly $\Var(\phi) \not\subseteq \Var(s)$.
The constrained term $\CTerm{s}{\phi}$ is considered the set of all instances of $s$ w.r.t.\ $\phi$.
The set of ground instances of $s$ w.r.t.\ substitutions that respect $\phi$ is denoted by $\GInst{\CTerm{s}{\phi}}$:
$\GInst{\CTerm{s}{\phi}} = \{ s\gamma \in T(\Sigma) \mid \mbox{$\gamma$ is a substitution respecting $\phi$}\}$.%
\footnote{Since this paper deals with inductive theorems, we consider ground instances of constrained terms.}
Two constrained terms $\CTerm{s}{\phi}$ and $\CTerm{t}{\psi}$ are called \emph{unifiable} if 
$s$ and $t$ are unifiable with an mgu $\sigma$ such that
$\Ran(\sigma|_{\Var(\phi,\psi)}) \subseteq T(\Sigtheory,\cV)$
and $(\phi \land \psi)\sigma$ is satisfiable~\cite[Definition~4.4]{NKN25frocos} (cf.~\cite[Definition~2]{SM23}).

We say that constrained terms $\CTerm{s}{\phi}$ and $\CTerm{t}{\psi}$ are \emph{equivalent}, written as $\CTerm{s}{\phi} \sim \CTerm{t}{\psi}$, if for any substitution $\gamma$ respecting $\phi$, there exists a substitution $\delta$ respecting $\psi$ such that $s\gamma = t\delta$, and vice versa.
Note that $\CTerm{s}{\phi} \sim \CTerm{t}{\psi}$ if and only if 
$\GInst{\CTerm{s}{\phi}} = \GInst{\CTerm{t}{\psi}}$.
The \emph{rewrite relation} $\simcto_\cR$ over constrained terms is defined as follows:
$\CTerm{s}{\phi} \simcto_\cR \CTerm{t}{\psi}$ if and only if
$\CTerm{s}{\phi} \sim\cdot\ctoBase\cdot\sim \CTerm{t}{\psi}$, where 
${\ctoBase} = \{ (\CTerm{s'[\ell\gamma]_p}{\phi'}, \CTerm{s'[r\gamma]_p}{\phi'}) 
\mid \ell \to r ~ [\varphi] \in \cR\cup\cRcalc, ~ 
\Dom(\gamma) \supseteq \LVar(\ell \to r ~ [\varphi]), ~
\forall x \in \LVar(\ell \to r ~ [\varphi]).\ x\gamma \in \Val\cup\Var(\phi'), ~
\mbox{$(\phi \Rightarrow \varphi\gamma)$ is valid}\}$.
For $\simcto_\cR$ over constrained terms, we may say that the reduction \emph{occurs at position $p$} and may write $\simcto_{p,\cR}$ or $\simcto_{p,\ell \to r ~[\varphi]}$ instead of $\simcto_\cR$. 

\begin{example}
Consider the LCTRS $\cRpow$ in \Cref{ex:pow} again.
The constrained term $\CTerm{\symb{pow}(2,y)}{y > 2}$ represents the set $\{ \symb{pow}(2,3), \symb{pow}(2,4), \ldots \}$, and is reduced by $\cRpow$ as follows:
$\CTerm{\symb{pow}(2,y)}{y > 2}
\mathrel{\simcto_{\cRpow}}
\CTerm{2 \times \symb{pow}(2,y-1)}{y > 2}
\mathrel{\simcto_{\cRpow}}
\CTerm{2 \times (2 \times \symb{pow}(2,y'))}{y > 2 \land y' = y - 1}
\mathrel{\simcto_{\cRpow}}
\CTerm{2 \times (2 \times (2 \times \symb{pow}(2,y'')))}{y > 2 \land y'' = y' - 2}
$.
\end{example}

\section{Restriction to LCTRSs}
\label{sec:restriction-to-LCTRSs}

Constrained rewriting is defined with equivalence $\sim$ over constrained terms, which makes definitions, proofs, discussions, etc, very complicated.
To avoid the use of $\sim$, we restrict ourselves to some class of LCTRSs.
To be more precise, LCTRSs in this paper are assumed to satisfy all of the following:
\begin{enumerate}
\renewcommand{\theenumi}{\textbf{A\arabic{enumi}}}
\renewcommand{\labelenumi}{\textbf{A\arabic{enumi}}.}
\leftskip=2ex
    \item \label{A:no-extra-variables}
    for any constrained rewrite rule $\CRule{\ell}{r}{\varphi}$, all the extra variables---variables in $\Var(r) \setminus\Var(\ell)$---appear in $\varphi$: $\Var(r) \subseteq \Var(\ell,\varphi)$,
    \item \label{A:left-value-free}
    every constrained rewrite rule is left-value-free,
        and
    \item \label{A:left-TV-linear}
    every constrained rewrite rule is left-linear w.r.t.\ theory variables (left-TV-linear).
\end{enumerate}

The logical variables of $\CRule{\ell}{r}{\varphi}$ are instantiated with values in applying $\CRule{\ell}{r}{\varphi}$ to terms and 
are instantiated with values or logical variables in applying $\CRule{\ell}{r}{\varphi}$ to constrained terms.
Thus, to simplify discussions, we assume~{\ref{A:no-extra-variables}}:
For $\CRule{\ell}{r}{\varphi}$, if there is a variable $x$ in $\Var(r) \setminus \Var(\ell,\varphi)$, then we convert 
$\CRule{\ell}{r}{\varphi}$ to $\CRule{\ell}{r}{\varphi \land \bigwedge_{x \in \Var(r) \setminus \Var(\ell,\varphi)} (x = x)}$.

The assumption~{\ref{A:left-value-free}} is not an actual limitation because 
for any LCTRS $\cR$, there exists a left-value-free and left-LV-linear LCTRS $\cR'$ such that ${\to_{\cR}} = {\to_{\cR'}}$ on ground terms.
\begin{theorem}
Let $\CRule{\ell}{r}{\phi}$ be a constrained rewrite rule,  
$p_1,\ldots,p_n$ the positions 
with $\{p_1,\ldots,p_n\} = \{ p \in \Pos(\ell) \mid \ell|_p \in \Val \cup \Var(\phi) \}$,
and
$x_1,\ldots,x_n$ pairwise distinct variables such that $\{x_1,\ldots,x_n\} \cap \Var(\ell,r,\phi) = \emptyset$.
Let $\CRule{\ell'}{r}{\phi'}$ be a constrained rule such that 
    $\ell' = \ell[x_1,\ldots,x_n]_{p_1,\ldots,p_n}$
        and
    $\phi' = ( \phi \land (\bigwedge_{i=1}^n (x_i = \ell|_{p_i})) \land (\bigwedge_{y \in \Var(r)\setminus\Var(\ell,\varphi)} (y = y) )$.
Then, ${\to_{\CRule{\ell}{r}{\phi}}} = {\to_{\CRule{\ell'}{r}{\phi'}}}$.
\end{theorem}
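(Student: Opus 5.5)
We prove the two inclusions ${\to_{\CRule{\ell}{r}{\phi}}} \subseteq {\to_{\CRule{\ell'}{r}{\phi'}}}$ and the converse directly from the definition of the rewrite relation. Since both rules have the same right-hand side and the rewrite step replaces $s|_p$ by the instance of $r$ at the same position $p$, it suffices to show: for every term $u$, there is a substitution $\gamma$ respecting $\CRule{\ell}{r}{\phi}$ with $u = \ell\gamma$ if and only if there is a substitution $\gamma'$ respecting $\CRule{\ell'}{r}{\phi'}$ with $u = \ell'\gamma'$, and in that case $r\gamma = r\gamma'$. Throughout we use that $\{x_1,\ldots,x_n\}$ is fresh, and that the conjuncts $y = y$ add no real condition except forcing the extra variables $y$ of $r$ to be logical in both rules. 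Note that $\Var(r)\setminus\Var(\ell,\phi)$ is the same set computed from $\ell'$, because the only variables of $\ell$ removed in $\ell'$ lie in $\Var(\phi)$.

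\emph{($\subseteq$)} Let $\gamma$ respect $\CRule{\ell}{r}{\phi}$ with $u=\ell\gamma$. Define $\gamma' = \gamma \cup \{x_i \mapsto \ell|_{p_i}\gamma \mid 1\le i\le n\}$. Each $\ell|_{p_i}$ is either a value or a variable of $\phi$, which is logical. Hence $\ell|_{p_i}\gamma \in \Val$, and so $x_i\gamma'\in\Val$.

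Next, $\ell'\gamma' = \ell[\ell|_{p_1}\gamma,\ldots]_{p_1,\ldots,p_n}\gamma = \ell\gamma$. For this we need the $p_i$ to be pairwise parallel. This holds because each $\ell|_{p_i}$ is a leaf (a constant value or a variable), so no $p_i$ is a proper prefix of another.

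Then $\phi'\gamma'$ evaluates to true: $\phi\gamma$ holds, each $x_i\gamma' = \ell|_{p_i}\gamma$ holds, and $y=y$ is trivial. The logical variables of the new rule are $\Var(\phi')\cup(\Var(r)\setminus\Var(\ell'))$. These are all mapped to values: the old logical variables by $\gamma$, the $x_i$ as shown, and any variable of $\ell$ that disappeared from $\ell'$ lies in $\Var(\phi)$. Finally $r\gamma'=r\gamma$ by freshness of the $x_i$.

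\emph{($\supseteq$)} Let $\gamma'$ respect the new rule with $u=\ell'\gamma'$. Take $\gamma=\gamma'$ restricted away from $\{x_i\}$. The main step is showing $\ell\gamma = \ell'\gamma'$. At each $p_i$, $\ell'$ has $x_i$, and $x_i\gamma'$ is a value. Validity of the conjunct $x_i = \ell|_{p_i}$ under $\gamma'$ gives $\Eval{x_i\gamma'} = \Eval{\ell|_{p_i}\gamma'}$, where $\ell|_{p_i}\gamma'$ is a value: either a value constant, or a variable of $\phi$, which is logical and hence instantiated by a value. Since $\cJ$ is a bijection on $\Val$, the two values coincide syntactically, so $x_i\gamma' = \ell|_{p_i}\gamma$.

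Outside the positions $p_i$, $\ell$ and $\ell'$ agree. Variables of $\ell$ occurring only at some $p_i$ are still bound in $\gamma'$, since they occur in $\phi'$. Hence $\ell\gamma=\ell'\gamma'=u$. Respect of $\CRule{\ell}{r}{\phi}$ follows since $\phi$ is a conjunct of $\phi'$ and every old logical variable is a new logical variable or lies in $\Var(\phi)$. Again $r\gamma=r\gamma'$ by freshness.

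The main obstacle is this value-identification step, i.e.\ turning semantic equality into syntactic equality of terms. It is handled by the bijectivity of $\cJ$ on $\Val$ together with the fact that logical variables receive values. A secondary point requiring care is the bookkeeping of the logical-variable sets, including the role of the $y=y$ conjuncts.
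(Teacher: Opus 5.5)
Your proof is correct and takes the same route as the paper, which just calls the result trivial from the definition of $\to_\cR$ and the construction of $\ell'$ and $\phi'$. You spell out exactly that unfolding, where the only non-bookkeeping step is identifying $x_i\gamma'$ with $\ell|_{p_i}\gamma'$ via bijectivity of $\cJ$ on $\Val$; as a cosmetic point, restrict $\gamma$ to $\Var(\ell,r,\phi)$ before forming $\gamma \cup \{x_i \mapsto \ell|_{p_i}\gamma\}$, in case $\gamma$ already binds some $x_i$.
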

\begin{proof}
Trivial by the definition of $\to_\cR$ and the construction of $\ell',\phi'$.
\qed
\end{proof}


The equivalence $\sim$ over constrained terms is used before $\ctoBase$
in order to, for a constrained term $\CTerm{s}{\phi}$, 
    rename variables in $s$ and $\phi$,
    move a value to the constrained part,
    conjunct a constraint $\phi'$ with $\phi$ for a constrained rewrite rule $\CRule{\ell}{r}{\varphi}$ applied at $\toBase$,
    etc.
The main role of $\sim$ for $\simcto_\cR$ is 
to make a rule $\CRule{\ell}{r}{\varphi}$ applicable at the $\ctoBase$-step.
Regarding left-value-free and left-TV-linear LCTRSs,
we have the following property.
\begin{restatable}{theorem}{ThemSimRemoval}
\label{thm:sim-removal}
Let $\CRule{\ell}{r}{\varphi}$ be a left-value-free and left-TV-free constrained rewrite rule in $\cR$,
$\CTerm{s}{\phi},\CTerm{t}{\psi}$ constrained terms, and $p$ a position of $s$ such that
$\Var(\ell,r,\varphi) \cap \Var(s,\phi,t,\psi) = \emptyset$.
If $\CTerm{s}{\phi} \sim \cdot 
\ctoBase[p,\CRule{\ell}{r}{\phi}] \CTerm{t}{\psi}$, then
there exists a substitution $\gamma$ such that
$\CTerm{s}{\phi} \sim \CTerm{s}{\phi \land \varphi\gamma} \ctoBase[p,\CRule{\ell}{r}{\varphi}] \CTerm{s[r\gamma]_p}{\phi \land \varphi\gamma}
\sim \CTerm{t}{\psi}$.
\end{restatable}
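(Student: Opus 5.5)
We unfold the hypothesis and use the two syntactic restrictions on the rule to rebuild the base step directly on the constrained term $\CTerm{s}{\phi}$ itself. Throughout, we read ``left-TV-free'' as left-TV-linear, and the rule subscript as $\varphi$.

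\emph{Unfolding the hypothesis.} By assumption there is a constrained term $\CTerm{s'}{\phi'}$ with $\CTerm{s}{\phi} \sim \CTerm{s'}{\phi'}$ and $\CTerm{s'}{\phi'} \ctoBase[p,\CRule{\ell}{r}{\varphi}] \CTerm{t}{\psi}$. By the definition of $\ctoBase$ this yields three facts:
\begin{itemize}
\item $s'|_p = \ell\delta$, $t = s'[r\delta]_p$ and $\psi = \phi'$;
\item $x\delta \in \Val \cup \Var(\phi')$ for every logical variable $x$ of the rule;
\item $\phi' \Rightarrow \varphi\delta$ is valid.
\end{itemize}

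\emph{Defining $\gamma$ from a left match on $s$.} The main point is to transfer this matching from $s'$ to $s$. First we show that $p \in \Pos(s)$ and that $s|_p$ matches $\ell$ with some $\gamma_0$.

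Take a ground instance $s'\theta$ with $\theta$ respecting $\phi'$. By equivalence there is $\theta_0$ respecting $\phi$ with $s\theta_0 = s'\theta$.

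Because $\ell$ is left-value-free, every non-variable position of $\ell$ carries a non-value symbol. Such symbols cannot be introduced by instantiating theory variables of $s$ with values. Hence the non-variable skeleton of $\ell$ must already occur in $s$ at $p$. To make this rigorous, we choose $\theta$ and $\theta_0$ generically, i.e.\ as ground instances that map variables to terms not occurring in the rule, whenever the sort allows it. So $s|_p = \ell\gamma_0$ for some $\gamma_0$.

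By left-TV-linearity, each theory variable $x$ of $\ell$ occurs once. Therefore $x\gamma_0$ is a single subterm of $s$, which is a theory variable or a value whenever $x\delta$ is.

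We then extend $\gamma_0$ to extra and constraint-only variables $z \in \Var(\varphi)\setminus\Var(\ell)$ by fresh variables $z\gamma := z'$ not occurring in $s,\phi$. This is legitimate by \ref{A:no-extra-variables} and variable disjointness. The result is $\gamma$, which satisfies $x\gamma \in \Val \cup \Var(\phi\land\varphi\gamma)$ for all logical variables. The only caveat is that any logical variable matched in $s$ against a non-variable non-value theory subterm must be ruled out; we argue this via the equivalence and validity of $\phi'\Rightarrow\varphi\delta$. Where needed, we first replace $s$ by an abstracted variant inside the $\sim$-class; this is the step we expect to be the main obstacle.

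\emph{The three relations.}

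First, $\CTerm{s}{\phi} \sim \CTerm{s}{\phi\land\varphi\gamma}$. The direction from right to left is immediate. For the direction from left to right, take $\theta_0$ respecting $\phi$. Via the equivalence, pick $\theta$ respecting $\phi'$ with the same instance, so that $\varphi\delta\theta$ holds. Now define values for the fresh variables $z'$ as $z\delta\theta$. Because $\ell$ is TV-linear, the matched theory variables agree: $x\gamma\theta_0 = x\delta\theta$. Hence $\varphi\gamma$ holds under the extended $\theta_0$, while $s\theta_0$ is unchanged.

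Second, the middle step $\CTerm{s}{\phi \land \varphi\gamma} \ctoBase \CTerm{s[r\gamma]_p}{\phi \land \varphi\gamma}$ is immediate from the definition, since $(\phi\land\varphi\gamma) \Rightarrow \varphi\gamma$ is valid.

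Third, $\CTerm{s[r\gamma]_p}{\phi\land\varphi\gamma} \sim \CTerm{t}{\psi}$. We build the instance correspondences in both directions in the same way as for the first relation. The key is that $r$'s variables are covered by $\ell$ and $\varphi$, so $r\gamma\theta_0 = r\delta\theta$ once the instantiations agree on $\Var(\ell,\varphi)$. Outside $p$ we have $s[\cdot]_p\theta_0 = s'[\cdot]_p\theta$, by the equality of instances.

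Collecting these, the most delicate part is the simultaneous choice of the instantiations, so that they agree both on the context and on the logical variables of the rule.
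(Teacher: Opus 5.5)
Granting the match $s|_p=\ell\gamma_0$, your first relation and the middle $\ctoBase$-step are correct. The gap is the third relation $\CTerm{s[r\gamma]_p}{\phi\land\varphi\gamma}\sim\CTerm{t}{\psi}$: it does not follow ``in the same way as for the first relation'' once $\varphi$ has a variable $z\notin\Var(\ell)$. In the left-to-right direction you are handed $\theta_0$ respecting $\phi\land\varphi\gamma$, so $z'\theta_0$ is any value that $\varphi\gamma$ permits. You then need $\theta$ respecting $\phi'$ with $z\delta\theta=z'\theta_0$, but $\phi'$ may constrain $z\delta$ much more tightly than $\varphi$ does. The device from the first relation, choosing the value of $z'$ from $\theta$, is unavailable here because $z'\theta_0$ is given, not chosen. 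Concretely, take $\CRule{\symb{f}(y)}{\symb{g}(z)}{z\geq y}$. Then $\CTerm{\symb{f}(x)}{x>0}\sim\CTerm{\symb{f}(x)}{x>0\land w=x+5}\ctoBase\CTerm{\symb{g}(w)}{x>0\land w=x+5}$ via $\delta=\{y\mapsto x, z\mapsto w\}$. Your $\gamma$ gives $\CTerm{\symb{g}(z')}{x>0\land z'\geq x}$, which has the instance $\symb{g}(1)$, whereas every instance of $\CTerm{t}{\psi}$ is $\symb{g}(u)$ with $u\geq 6$. In fact no $\gamma$ works: $y\gamma=x$ is forced; if $z\gamma$ is a value, the first $\sim$ fails, and if it is a variable, $\symb{g}(1)$ is again an instance. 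So this cannot be repaired inside your argument. The statement as written needs an extra hypothesis such as $\Var(\varphi)\subseteq\Var(\ell)$, or a weaker conclusion whose constraint also keeps (a renamed copy of) $\phi'$. The paper's proof does not handle this either; it simply asserts the final $\sim$.

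The matching step has a second hole. Left-TV-linearity restricts only theory-sorted variables, so a term-sort variable may occur twice in $\ell$, and then a match of $s'|_p$ does not transfer to $s|_p$. For example, with $\CRule{\symb{eq}(u,u)}{\symb{a}}{\symb{true}}$ and $u$ of a term sort, $\CTerm{\symb{eq}(\symb{c}(x),\symb{c}(y))}{x=y}\sim\CTerm{\symb{eq}(\symb{c}(x),\symb{c}(x))}{x=x}$, yet only the second term is an instance of $\ell$. The paper's step $s_0|_p=\ell\sigma_0$ has the same flaw, so genuine left-linearity is needed. Likewise, if $\phi$ is unsatisfiable then $\CTerm{s}{\phi}\sim\CTerm{s'}{\symb{false}}$ for every $s'$, and no instance-based transfer is possible.

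Apart from these issues, the step is easier to make rigorous than you suggest. You cannot choose both $\theta$ and $\theta_0$. Instead, take $\theta_0$ to be the identity outside $\Var(\phi)$; this is allowed because respecting substitutions need not be ground. Then obtain $\theta$ from $\sim$. With that choice your skeleton argument goes through. For each logical $x$ of $\ell$, your observation $x\gamma_0\theta_0=x\delta\theta\in\Val$ then forces $x\gamma_0\in\Val\cup\Var(\phi)$, which already settles your ``caveat''. No passage to an abstracted variant is needed, and it would not be permitted anyway, since the conclusion concerns $s$ itself. The paper uses its value-free representative $s_0$ only to find $\sigma_0$ and then composes back with $\gamma=\sigma_0\theta$. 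Under left-linearity, satisfiable $\phi$, and $\Var(\varphi)\subseteq\Var(\ell)$, your three relations are correct as written.
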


Thanks to \Cref{thm:sim-removal}, removing $\sim$ from $\simcto$, we reformulate constrained rewriting of 
LCTRSs satisfying~{\ref{A:no-extra-variables}--\ref{A:left-TV-linear}} as follows.
\begin{definition}[$\cto_\cR$]
\label{def:cto}
The constrained rewriting $\cto_\cR$ of a left-value-free and left-TV-linear LCTRS $\cR$ satisfying {\rm\ref{A:no-extra-variables}} is defined as follows:
$\CTerm{s}{\phi} \cto_\cR \CTerm{t}{\psi}$ if and only if
there exist a constrained rewrite rule $\CRule{\ell}{r}{\varphi} \in \cR\cup\cRcalc$, a position $p$ of $s$, and a substitution $\gamma$ such that 
\begin{itemize}
    \item $s|_p = \ell\gamma$,
    $t = s[r\gamma]_p$,
    $\psi = (\phi \land \varphi\gamma)$,
    \item 
    $x\gamma \in \Val\cup\Var(\phi)$ for any variable $x \in \Var(\varphi)\cap\Var(\ell)$,
    \item 
    $x\gamma \in \Val\cup\Var(\phi)\cup(\cV \setminus \Var(s))$ for any variable $x \in \Var(\varphi)\setminus\Var(\ell)$,%
        \footnote{In the definition of $\ctoBase$, for a variable $x \in \Var(\varphi) \setminus \Var(\ell)$, we can choose a value or a variable used in the constraint part.
    In the latter case, we can choose a fresh variable $y$ by adding a redundant constraint $y = y$.
    Note that $\Var(r) \subseteq \Var(\varphi,\ell)$ by~{\ref{A:no-extra-variables}}.
    }
        and
    \item 
    $(\phi \Rightarrow (\exists \vec{z}.\ \varphi\gamma))$ is valid, 
    where $\{\vec{z}\} = \{ z \in \cV \setminus \Var(\phi) \mid \exists x \in \Var(\varphi)\setminus\Var(\ell).\ x\gamma = z \}$.%
    \footnote{Since we can choose fresh variables for variables in $\Var(\varphi)\setminus\Var(\ell)$, 
    we existentially quantify such fresh variables $\vec{z}$ introduced by $\gamma$.}
\end{itemize}
\end{definition}
By definition, it is clear that ${\cto_\cR} \subseteq {\simcto_\cR}$.
Therefore, by \Cref{thm:sim-removal}, we have that
$({\cto_\cR\cdot\sim}) = {\simcto_\cR}$, and thus
$({\cto_\cR^n\cdot\sim}) = {\simcto_\cR^n}$ for all $n \geq 1$.

\section{Constrained $\exists$-Equations}
\label{sec:E-equations}

In this section, we first introduce existential quantification to the  equation part of constrained equations, formulating \emph{constrained $\exists$-equations}.
Then, we extend constrained rewriting to constrained $\exists$-equations.

\subsection{Introduction of Existential Quantification to Equations}
\label{subsec:eLCTRS}

First, we introduce \emph{existentially quantified equations}.

\begin{definition}[$\exists$-equation]
An \emph{existentially quantified equation} (\/$\exists$-equation, for short) is a quadruple
$\EEqn{\vec{x}}{\eta}{s}{t}$
of a finite variable sequence $\vec{x}$, a satisfiable constraint $\eta$, and terms $s,t$ of the same sort such that
    $\{\vec{x}\} \subseteq \Var(\eta) \cap \Var(s,t)$.
Note that if $\vec{x}$ is $\epsilon$, then $\eta$ is a closed valid formula, e.g., $\symb{true}$.
We may write $\Eqn{s}{t}$ instead of $\EEqn{\vec{x}}{\eta}{s}{t}$ if $\vec{x} = \epsilon$.
We say that $\EEqn{\vec{x}}{\eta}{s}{t}$ is \emph{closed} if $\Var(s,t,\eta) = \{\vec{x}\}$.
We denote the set of ground instances of $\EEqn{\vec{x}}{\eta}{s}{t}$ by 
$\GInst{\EEqn{\vec{x}}{\eta}{s}{t}}$:
$\GInst{\EEqn{\vec{x}}{\eta}{s}{t}}
=
\{ \Eqn{s\gamma}{t\gamma} \mid s\gamma,t\gamma \in T(\Sigma), ~ \mbox{$\gamma$ is a substitution respecting $\eta$} \}
$.
\end{definition}
We assume $\eta$ to be satisfiable so that $\EEqn{\vec{x}}{\eta}{s}{t}$ represents at least one equation.
When we have no constraint for the existentially quantified variables $\vec{x}$, we write a redundant constraint, e.g., $x = x$ in $\eta$.

\begin{example}
Consider the signature in \Cref{ex:pow}.
We have $\exists$-equations
$\EEqn{x}{x \geq 0}{\symb{pow}(2,y)}{x}$,
$\EEqn{x}{x = x}{\symb{pow}(2,x)}{2}$, etc.
\end{example}

Next, we introduce \emph{constrained $\exists$-equations}.

\begin{definition}[constrained $\exists$-equation]
A \emph{constrained $\exists$-equation} is a pair $\CEEqn{\vec{x}}{\eta}{s}{t}{\phi}$ of 
an $\exists$-equation $\EEqn{\vec{x}}{\eta}{s}{t}$ and a constraint $\phi$ such that
    $\Var(\eta) \setminus \{\vec{x}\} \subseteq \Var(\phi)$ 
	and
	$(\phi \Rightarrow (\exists \vec{x}.\ \eta))$ is valid.%
		\footnote{This ensures that for every substitution $\gamma$ respecting $\phi$, there exists at least one 
        equation $\Eqn{s\gamma}{t\gamma}$, i.e., $\GInst{\EEqn{\vec{x}}{\eta\gamma|_{\Var(\eta)\setminus\{\vec{x}\}}}{s\gamma}{t\gamma}} \ne \emptyset$.}
We simply write $\EEqn{\vec{x}}{\eta}{s}{t}$ if $\phi = \symb{true}$.
We denote the set of closed $\exists$-equations as instances of $\CEEqn{\vec{x}}{\eta}{s}{t}{\phi}$ w.r.t.\ 
substitutions that respect $\phi$ by 
$\GInst{\CEEqn{\vec{x}}{\eta}{s}{t}{\phi}}$: 
$\GInst{\CEEqn{\vec{x}}{\eta}{s}{t}{\phi}}
=
\{ \EEqn{\vec{x}}{\eta\gamma|_{\Var(\phi)\setminus\{\vec{x}\}}}{s\gamma}{t\gamma} \mid s\gamma,t\gamma \in T(\Sigma), ~ 
\mbox{$\gamma$ is a substitution respecting $\phi$} \}
$.
\end{definition}

\begin{example}
\label{ex:constrained-eterm}
Consider the signature in \Cref{ex:pow}.
We have constrained $\exists$-equations
$\CEEqn{x}{x \geq 2}{\symb{pow}(2,y)}{x}{y > 0}$,
$\CEEqn{x}{x > 0}{\symb{pow}(y,x)}{x}{y > 0}$, etc.
We also have that
$\GInst{\CEEqn{x}{x \geq 2}{\symb{pow}(2,y)}{x}{y > 0}} = 
\{\EEqn{x}{x \geq 2}{\symb{pow}(2,n)}{x} \mid n \in \mathbb{Z}, n > 0 \}$
and
$\GInst{\CEEqn{x}{x > 0}{\symb{pow}(y,x)}{x}{y > 0}} = 
\{ \EEqn{x}{x > 0}{\symb{pow}(n,x)}{x} \mid n \in \mathbb{Z}, n > 0 \}$.
\end{example}

\subsection{Constrained Rewriting of $\exists$-Terms}
\label{subsec:constrained-rewriting}

In the RI framework, by considering $\approx$ a binary function symbol, constrained equations are reduced by constrained rewrite rules.
In this section, considering equations as terms, we extend constrained rewriting to constrained $\exists$-equations.

To simplify discussions, we introduce (constrained) \emph{$\exists$-terms} as well as (constrained) $\exists$-equations.
An \emph{existentially quantified term} ($\exists$-term, for short) is a triple
$\ETerm{\vec{x}}{\eta}{s}$
of a finite variable sequence $\vec{x}$, a satisfiable constraint $\eta$, and a term $s$ such that
    $\{\vec{x}\} \subseteq \Var(\eta) \cap \Var(s)$.
We say that $\ETerm{\vec{x}}{\eta}{s}$ is \emph{closed} if $\Var(s) = \{\vec{x}\}$.
A \emph{constrained $\exists$-term} is a pair $\CETerm{\vec{x}}{\eta}{s}{\phi}$ of 
an $\exists$-term $\ETerm{\vec{x}}{\eta}{s}$ and a constraint $\phi$ such that
    $\Var(\eta) \setminus \{\vec{x}\} \subseteq \Var(\phi)$.
As for (constrained) $\exists$-equations, we use $\GInst{}$ 
for (constrained) $\exists$-terms.

As for constrained rewriting over usual constrained terms, 
we do not consider the equivalence between constrained $\exists$-terms.
Without using $\sim$, we define constrained rewriting of constrained $\exists$-terms as follows.

\begin{definition}[constrained rewriting $\cto_\cR$ of constrained $\exists$-terms]
\label{def:cto-for-constrained-E-terms}
The reduction $\cto_\cR$ of an LCTRS $\cR$ over constrained $\exists$-terms is defined as follows:
$\CETerm{\vec{x}}{\eta}{s}{\phi} \cto_{\cR} \CETerm{\vec{y}}{\delta}{t}{\psi}$
if and only if 
	there exist a 
    constrained rewrite rule $\CRule{\ell}{r}{\varphi} \in \cR\cup\cRcalc$, 
    a position $p$ of $s$, 
    and a substitution $\gamma$ such that
    \begin{itemize}
        \item $s|_p = \ell\gamma$,
        $t = s[r\gamma]_p$,
        $\psi = (\phi \land \varphi\gamma)$,
        \item 
        $x\gamma \in \Val\cup\Var(\phi)$ for any variable $x \in \Var(\varphi)\cap\Var(\ell)$,
        \item 
        $x\gamma \in \cV \setminus (\Var(s)\cup \{\vec{x}\})$ for any variable $x \in \Var(\varphi)\setminus\Var(\ell)$,%
        \footnote{For a variable $x \in \Var(\varphi)\setminus\Var(\ell)$, 
        we choose a fresh variable which is existentially quantified in the resulting constrained $\exists$-term.}
        \item
        $(\phi \Rightarrow (\exists \vec{z}.\ \varphi\gamma))$ is valid, 
            and
        \item 
        $(\exists \vec{y}.\ \delta) = 
        (\exists \vec{x},\vec{z}.\ (\eta \land \varphi\gamma))
        $, 
        where $\{\vec{z}\} = \{ x\gamma \mid \exists x \in \Var(\varphi)\setminus\Var(\ell) \}$.
    \end{itemize}
    In the above definition, we assume w.l.o.g.\ that $\{\vec{x}\} \cap \Var(s,\phi) = \emptyset$:
    if $\{\vec{x}\} \cap \Var(s,\phi) \ne \emptyset$, then we may rename $\vec{x}$ fresh variables.
\end{definition}
For readability, we often simplify $\psi$ and $\delta$ in \Cref{def:cto-for-constrained-E-terms} to \emph{equivalent} ones,%
\footnote{Constraints $\phi$ and $\phi'$ are \emph{equivalent}
if $\Var(\phi) = \Var(\phi')$ and $(\phi \Leftrightarrow \phi')$ is valid.}
respectively, without notice.
The differences of $\cto_\cR$ in \Cref{def:cto-for-constrained-E-terms} from $\cto_\cR$ in \Cref{def:cto} are the following:
\begin{itemize}
    \item variables in $\Var(\phi) \setminus \Var(\ell)$ are instantiated by $\gamma$ with fresh variables so as to be existentially quantified,
        and
    \item $\varphi\gamma$ is conjuncted with $\eta$ as $\exists \vec{x},\vec{z}.\ (\eta \land \varphi\gamma)$.    
\end{itemize}

\begin{example}
The constrained $\exists$-term $\CETerm{m}{m \geq 0}{(2\times \symb{pow}(2,n-1))+m}{n > 1}$ is reduced by $\cRpow$ as follows:
\[
\begin{array}{@{}l@{\>}c@{\>}l@{\hspace{-9ex}}l@{}}
\lefteqn{\CETerm{m}{m \geq 0}{(2\times \symb{pow}(2,n-1))+m}{n > 1}}
\\
& \cto_{\cRpow} &
\ETermBra{m,n'}{m \geq 0 \land n'=n-1}{(2\times \symb{pow}(2,n'))+m}&\Constraint{n > 1 \land n''=n-1}
\\
& \cto_{\cRpow} &
\ETermBra{m,n'}{m \geq 0 \land n'=n-1}{(2\times (2\times \symb{pow}(2,n'-1)))+m}\\&&&\Constraint{n > 1 \land n''=n-1}
\\
\end{array}
\]
Note that the variable $n''$ is the renamed one from $n'$ to avoid any shared variable between the constrained part and the constraint for existentially quantified variables.
\end{example}

Roughly speaking, to reduce a constrained term $\CTerm{s}{\phi}$ by a rule $\CRule{\ell}{r}{\varphi}$ at a position $p$ of $s$, 
\emph{all} ground terms in $\GInst{\CTerm{s}{\phi}}$ have to be reducible by $\CRule{\ell}{r}{\varphi}$ at $p$.
On the other hand, to reduce an $\exists$-term $\ETerm{\vec{x}}{\phi}{s}$ by $\CRule{\ell}{r}{\varphi}$ at $p$,
\emph{not all} ground terms in $\GInst{\ETerm{\vec{x}}{\phi}{s}}$ have to be reducible by $\CRule{\ell}{r}{\varphi}$ at $p$.
Note that $\GInst{\CTerm{s}{\phi}} = \GInst{\ETerm{\vec{x}}{\phi}{s}}$ by definition.
The same can be said of constrained $\exists$-terms.
\begin{example}
Let us consider $\cRtoy$ in \Cref{ex:toy} again.
The constrained term $\CTerm{\symb{f}(x,y)}{x \geq y \land y \geq 0}$ cannot be reduced by the first rule of $\cRtoy$ because $\symb{f}(0,0) \in \GInst{\CTerm{\symb{f}(x,0)}{x \geq 0}}$ is not reducible by the first rule.
On the other hand, the constrained $\exists$-term $\CETerm{x}{x \geq y}{\symb{f}(x,y)}{y \geq 0}$ is reduced by the first rule of $\cRtoy$ to $\CETerm{x,z}{x \geq y \land x > y \land z \geq x}{x + z}{y \geq 0}$, while $\symb{f}(0,0) \in \GInst{\CETerm{x}{x \geq y}{\symb{f}(x,y)}{y \geq 0}}$.
\end{example}

Constrained rewriting $\cto_\cR$ of constrained $\exists$-terms is sound for rewriting of terms, as well as usual constrained rewriting~\cite[Theorems~2.19 and 2.20]{FKN17tocl}.
%
\begin{restatable}{theorem}{ThmSoundnessOfConstrainedRewriting}
\label{thm:soundness_of_constrained-rewriting}
For constrained $\exists$-terms 
$\CETerm{\vec{x}}{\eta}{s}{\phi}$ and $\CETerm{\vec{y}}{\delta}{t}{\psi}$ with $\{\vec{x}\}\cap\Var(\phi)=\emptyset$ and $\{\vec{y}\}\cap\Var(\psi)=\emptyset$,
if
$\CETerm{\vec{x}}{\eta}{s}{\phi} \cto_\cR \CETerm{\vec{y}}{\delta}{t}{\psi}$,
then
both of the following statements hold:
\begin{enumerate}
	\item 
for every substitution $\theta$ respecting $\phi \land \eta$,
there exists a substitution $\sigma$ such that
$\sigma$ respects $\psi \land \delta$
and
$s\theta \to_\cR t\sigma$,
        and
	\item 
for every substitution $\sigma$ respecting $\psi \land \delta$,
there exists a substitution $\theta$ such that
$\theta$ respects $\phi \land \eta$
and
$s\theta \to_\cR t\sigma$.
\end{enumerate}
\end{restatable}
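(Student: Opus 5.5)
Both statements follow by unfolding \Cref{def:cto-for-constrained-E-terms}. We fix the rule $\CRule{\ell}{r}{\varphi} \in \cR\cup\cRcalc$, the position $p$ and the substitution $\gamma$ witnessing the step. Thus $s|_p = \ell\gamma$, $t = s[r\gamma]_p$, $\psi = \phi \land \varphi\gamma$ and $(\exists\vec{y}.\ \delta) = (\exists \vec{x},\vec{z}.\ (\eta\land\varphi\gamma))$. We may take $\vec{y} = \vec{x},\vec{z}$ and $\delta = \eta\land\varphi\gamma$ up to the equivalence of constraints mentioned after the definition. In both directions we also assume (as the definition permits) that the variables of the rule are disjoint from those of the $\exists$-terms.

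\emph{Statement (1).} Let $\theta$ respect $\phi\land\eta$. Note that $\psi \land \delta$ contains $\varphi\gamma$, and $\varphi\gamma$ may mention the fresh variables $\vec{z}$, on which $\theta$ need not give values satisfying $\varphi\gamma$. So we modify $\theta$ on $\vec{z}$ only. Validity of $\phi \Rightarrow (\exists\vec{z}.\ \varphi\gamma)$ yields values $\vec{v}$ for $\vec{z}$ with $\Eval{\varphi\gamma\theta'} = \symb{true}$, where $\theta' = \theta|_{\cV\setminus\{\vec{z}\}} \cup \{\vec{z}\mapsto\vec{v}\}$. Set $\sigma = \theta'$.

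Since $\vec{z}$ is disjoint from $\Var(s)\cup\{\vec{x}\}$ and from $\Var(\phi)$ (by freshness), $\sigma$ agrees with $\theta$ on $\phi$, on $\eta$ and on $s$. Hence $\sigma$ respects $\psi\land\delta$ and $s\theta = s\sigma$.

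It remains to show $s\sigma \to_\cR t\sigma$ at $p$ via the rule with matcher $\gamma\sigma$ (restricted to the rule variables). We have $s\sigma|_p = \ell\gamma\sigma$ and $t\sigma = s\sigma[r\gamma\sigma]_p$. We must check that $\gamma\sigma$ respects the rule. By \ref{A:no-extra-variables}, $\LVar = \Var(\varphi)$, so take $x\in\Var(\varphi)$:
\begin{itemize}
\item if $x\in\Var(\ell)$, then $x\gamma\in\Val\cup\Var(\phi)$, so $x\gamma\sigma\in\Val$ because $\sigma$ respects $\phi$;
\item otherwise $x\gamma\in\{\vec{z}\}$, so $x\gamma\sigma\in\Val$.
\end{itemize}
Finally, $\Eval{\varphi\gamma\sigma} = \symb{true}$ by the choice of $\vec{v}$.

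\emph{Statement (2).} Let $\sigma$ respect $\psi\land\delta$ and take $\theta = \sigma$. Then $\theta$ respects $\phi\land\eta$, since these are conjuncts of $\psi\land\delta$. The rewrite step $s\sigma\to_\cR t\sigma$ is checked exactly as above, now using that $\sigma$ already satisfies $\varphi\gamma$ and assigns values to $\vec{z}$ because $\vec{z}\subseteq\Var(\delta)$.

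\emph{Main obstacle.} The delicate point is the variable bookkeeping in (1). We must justify that the fresh variables $\vec{z}$ occur in none of $s$, $\phi$, $\eta$. This follows from $x\gamma \in \cV\setminus(\Var(s)\cup\{\vec{x}\})$ together with the renaming convention ensuring $\vec{z}\cap\Var(\phi)=\emptyset$, which is implicit in the validity clause quantifying over $\vec{z}$. We also need that replacing $\delta$ and $\psi$ by equivalent constraints does not affect which substitutions respect them. Beyond that, the proof is routine.
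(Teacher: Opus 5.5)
Your proof is correct and is essentially the paper's argument. For the first claim you extend $\theta$ on the fresh variables $\vec{z}$ using the validity of $\phi \Rightarrow (\exists \vec{z}.\ \varphi\gamma)$ to get $\sigma = \theta \cup \theta'$, and for the second claim, which the paper only says is proved ``similarly'', you take $\theta = \sigma$. Your explicit check that $\gamma\sigma$ respects the rule via \ref{A:no-extra-variables}, and your point that $\{\vec{z}\} \cap \Var(\phi) = \emptyset$ is needed, just spell out what the paper's use of $\theta \cup \theta'$ tacitly assumes.
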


By definition, it is clear that all of the following are equivalent:
    $s \to_\cR t$,
    $\CTerm{s}{\symb{true}} \cto_\cR \CTerm{t}{\symb{true}}$,
        and
    $\CETerm{\epsilon}{\symb{true}}{s}{\symb{true}} \cto_\cR \CETerm{\epsilon}{\symb{true}}{t}{\symb{true}}$.
Thus, \Cref{thm:soundness_of_constrained-rewriting} implies the following corollary. 
\begin{corollary}
A left-value-free and left-TV-linear LCTRS $\cR$ satisfying~{\rm\ref{A:no-extra-variables}} is terminating if and only if $\cto_\cR$ over constrained $\exists$-terms is terminating.
\end{corollary}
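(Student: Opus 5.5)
We prove the two directions separately, in each case transferring an infinite chain from one relation to the other.

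\emph{If $\cto_\cR$ over constrained $\exists$-terms is terminating, then $\cR$ is terminating.} Suppose $\cR$ admits an infinite reduction $s_0 \to_\cR s_1 \to_\cR \cdots$. Each step $s_i \to_\cR s_{i+1}$ uses a rule $\CRule{\ell}{r}{\varphi}$ with a substitution $\gamma$ that instantiates all logical variables with values and satisfies $\Eval{\varphi\gamma}=\symb{true}$. By the equivalence observed just before the corollary, $s_i \to_\cR s_{i+1}$ holds iff $\CETerm{\epsilon}{\symb{true}}{s_i}{\symb{true}} \cto_\cR \CETerm{\epsilon}{\symb{true}}{s_{i+1}}{\symb{true}}$. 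One point needs care. Under \Cref{def:cto-for-constrained-E-terms}, extra variables of the rule must be mapped to fresh variables rather than to values, so the reduct literally carries $\exists\vec z.\ \varphi\gamma$ and differs syntactically from $\CETerm{\epsilon}{\symb{true}}{s_{i+1}}{\symb{true}}$. I would therefore argue directly with \Cref{thm:soundness_of_constrained-rewriting}(2), or more simply note the following. Instantiating the extra variables by fresh variables $\vec z$ gives a constrained step whose result has constraint $\phi\land\varphi\gamma$ and existential part extended by $\vec z$. The ground step we want to simulate is then one particular instance of it. Iterating this yields an infinite $\cto_\cR$ chain: at every stage the current term is a generalisation of $s_i$, and the constrained step can follow the ground step because left-value-freeness (\ref{A:left-value-free}) and left-TV-linearity (\ref{A:left-TV-linear}) ensure the matcher can map logical variables of $\ell$ to variables of $\phi$. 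Validity of $\phi\Rightarrow\exists\vec z.\ \varphi\gamma$ holds because the accumulated constraint pins the relevant variables to the values used in the ground reduction. Getting this validity condition right is the main obstacle. To avoid it entirely, I would first try to show that the ground step can be simulated literally with $\phi=\symb{true}$, taking $\gamma$ to map non-extra logical variables to values (which is allowed, since $\Val\cup\Var(\phi)$ is permitted). The extra variables would become fresh $\vec z$ with $\symb{true}\Rightarrow\exists\vec z.\ \varphi\gamma$ valid, since the original ground witness exists.

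\emph{If $\cR$ is terminating, then $\cto_\cR$ over constrained $\exists$-terms is terminating.} Suppose there is an infinite chain $\CETerm{\vec x_0}{\eta_0}{s_0}{\phi_0} \cto_\cR \CETerm{\vec x_1}{\eta_1}{s_1}{\phi_1} \cto_\cR \cdots$. The constraint $\phi_0\land\eta_0$ is satisfiable, by the definition of $\exists$-terms and the requirement on $\phi$. So pick some $\theta_0$ respecting it, and apply \Cref{thm:soundness_of_constrained-rewriting}(1) repeatedly. This gives $\theta_{i+1}$ respecting $\phi_{i+1}\land\eta_{i+1}$ with $s_i\theta_i \to_\cR s_{i+1}\theta_{i+1}$, so the ground terms $s_i\theta_i$ form an infinite $\to_\cR$ chain, a contradiction. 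The only side conditions to check are the renaming hypotheses $\{\vec x_i\}\cap\Var(\phi_i)=\emptyset$ required by the theorem; these hold after the w.l.o.g.\ renaming assumed in \Cref{def:cto-for-constrained-E-terms}. If $s_0\theta_0$ is not ground, we further instantiate its remaining variables arbitrarily, using closure of $\to_\cR$ under substitution.
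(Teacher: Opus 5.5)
\paragraph{Direction 1: $\cR$ terminating implies $\cto_\cR$ terminating.} This is the paper's argument: you lift an infinite $\cto_\cR$-chain to an infinite $\to_\cR$-chain via \Cref{thm:soundness_of_constrained-rewriting}(1). One correction is needed. Satisfiability of $\phi_0\land\eta_0$ does \emph{not} follow from the definition of constrained $\exists$-terms, which requires nothing of $\phi$. It is genuinely needed, because for unsatisfiable $\phi$ the validity side condition is vacuous. For example, $\CRule{\symb{f}(x)}{\symb{f}(x)}{x>0\land x<0}$ never fires in $\cR$, yet $\CETerm{\epsilon}{\symb{true}}{\symb{f}(y)}{y>0\land y<0}$ admits an infinite $\cto_\cR$-chain. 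So the statement must be restricted to satisfiable constraints; the paper assumes this tacitly as well.

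\paragraph{Direction 2: $\cto_\cR$ terminating implies $\cR$ terminating.} This is where the genuine gap lies. The paper settles it by asserting that $s\to_\cR t$ iff $\CETerm{\epsilon}{\symb{true}}{s}{\symb{true}}\cto_\cR\CETerm{\epsilon}{\symb{true}}{t}{\symb{true}}$. Your objection to this is correct. For any rule with $\Var(\varphi)\setminus\Var(\ell)\neq\emptyset$, calculation rules included, \Cref{def:cto-for-constrained-E-terms} forces those variables to become fresh existentially bound variables, never values.

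However, your repair does not work, for two reasons. First, to follow the next ground step, the matcher must send a logical variable of the next rule to one of these bound $\vec z$. The definition only allows $\Val\cup\Var(\phi)$ there, and it keeps bound variables apart from $\Var(\phi)$; that is exactly what the renaming to $n''$ in the paper's $\cRpow$ example is for. Second, even if such matching were allowed, $\phi$ pins $\vec z$ to the values of the ground chain only when $\varphi$ determines them (consider $z\geq 0$), so the validity check can fail. Your fallback with $\phi=\symb{true}$ covers one step only; the very next step meets the same problem.

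In fact no proof exists for the definition as written. Take $\cR=\{\CRule{\symb{h}}{\symb{g}(z)}{z\geq 0},\ \CRule{\symb{g}(x)}{\symb{h}}{x=5}\}$. It satisfies \ref{A:no-extra-variables}--\ref{A:left-TV-linear} and loops via $\symb{h}\to_\cR\symb{g}(5)\to_\cR\symb{h}$. Yet every $\cto_\cR$-chain is finite: once $\symb{h}$ becomes $\symb{g}(z')$ with $z'$ bound, the second rule can no longer apply, and calculations only create further bound variables.

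So this direction requires a change to the definition. For example, one could also let extra variables be instantiated by values, as \Cref{def:cto} does; that makes the paper's embedding literally true. As it stands, your proposal does not establish this direction.
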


\section{Rewriting Induction for Constrained $\exists$-Equations}
\label{sec:RI-for-constrained-E-equation}

In this section, we adapt the notion of \emph{inductive theorems} to constrained $\exists$-equations and then extend three main inference rules of the RI framework in~\cite{FKN17tocl} to constrained $\exists$-equations. 

When using RI frameworks, we often assume that a given system $\cR$ is 
	terminating,
	confluent,
	and
	quasi-reducible (i.e., sufficiently complete).
Termination is necessary to use RI principles, and confluence is necessary to disprove equations inductive theorems.
For page limitation, we do not deal with inference rules for disproving.
The reason why $\cR$ is assumed to be quasi-reducible is to use a decidable sufficient condition---being a \emph{basic} term---for \emph{reduction completeness} which ensures that, given a position of terms, we can make an exhaustive case analysis based on the application of rules in $\cR$.
%
In summary, we assume only termination of LCTRSs.


\subsection{Adapting Inductive Theorems to Constrained $\exists$-Equations}
\label{subsec:e-equations}



We naturally adapt the notion of inductive theorems to constrained $\exists$-equations.
\begin{definition}[inductive theorem]
A constrained $\exists$-equation 
$\CEEqn{\vec{x}}{\eta}{s}{t}{\phi}$
is said to be an \emph{inductive theorem} of $\cR$ if
for every ground normalized substitution $\gamma$ respecting $\phi$
such that $\Dom(\gamma) = \Var(s,t,\phi) \setminus \{\vec{x}\}$
and $\Ran(\gamma) \subseteq \GNF[\cR]$,
there exists a substitution $\theta$ such that
	$\Dom(\theta)=\{\vec{x}\}$,
	$\theta$ respects $\eta\gamma$,
	and
	$s\gamma\theta \mathrel{\leftrightarrow_\cR^*} t\gamma\theta$.
\end{definition}

\begin{example}
\label{ex:main-goal-equation}
The following constrained $\exists$-equation which corresponds to the inequality 
$(\ref{eqn:pow-2-n-geq-n})$ in \Cref{sec:intro}
is an inductive theorem of $\cRpow$:
\begin{equation}
\label{eqn:pow-initial-eqn}
	\CEEqn{m}{m \geq 0}{\symb{pow}(x,n)}{n + m}{x = 2 \land n \geq 0}
\end{equation}
where, to orient it, $2$ in $\symb{pow}(2,n)$ is replaced by a variable $x$ in advance, conjuncting $x=2$ with $n \geq 0$.
\end{example}

\subsection{RI Inference Rules for Constrained $\exists$-Equations}
\label{subsec:RI-inference-rules}

As described in \Cref{sec:intro}, RI frameworks usually consist of some inference rules.
Given a finite set $\cE$ of equations, we apply inference rules to the initial pair $\RIstate{\cE}{\emptyset}$ and try to get $\RIstate{\emptyset}{\cH}$ for some set $\cH$ of rewrite rules: 
\[
\RIstate{\cE}{\emptyset}
=
\RIstate{\cE_0}{\cH_0}
\mathrel{\RIstep}
\RIstate{\cE_1}{\cH_1}
\mathrel{\RIstep}
\cdots
\mathrel{\RIstep}
\RIstate{\cE_n}{\cH_n}
=
\RIstate{\emptyset}{\cH}
\]
where $\RIstep$ denotes the application of inference rules in one step.
A pair $\RIstate{\cE_i}{\cH_i}$ is called an \emph{RI state}.
$\cE_i$ is a set of equations and $\cH_i$ is a set of rewrite rules obtained by orienting equations.
Equations in $\cE_{i+1} \setminus \cE_i$ are so-called \emph{subgoals} for equations in $\cE_i\setminus \cE_{i+1}$.
Rewrite rules in $\cH_i$ perform as \emph{induction hypotheses} that can be used for $\RIstate{\cE_j}{\cH_j}$ with $i<j$.
The main inference rules are 
\begin{itemize}
	\item {\Expansion}---a rule-based case analysis with setting an induction hypothesis,
	\item {\Simplification}---the application of rules in $\cR\cup\cH_i$ to an equation in $\cE_i$,
		 and
	\item {\Deletion}---the removal of \emph{trivial} inductive theorems, e.g., $\Eqn{s}{s}$.
\end{itemize}
In this section, we adapt these three main inference rules for LCTRSs in~\cite{FKN17tocl} to constrained $\exists$-equations.
In the following, $\Eqn[\simeq]{s}{t}$ denotes either $\Eqn{s}{t}$ or $\Eqn{t}{s}$.

Before adapting the inference rules to constrained $\exists$-equations, we introduce the notion of \emph{reduction completeness} for positions of terms.
\begin{definition}[reduction-completeness]
A position $p$ of a term $t$ (and also $t|_p$) is said to be \emph{reduction-complete w.r.t.\ a constraint $\phi$} if $t|_p\gamma$ is a redex of $\cR$ for every substitution $\gamma$ such that $\Dom(\gamma)=\Var(t|_p)$, $\Ran(\gamma) \subseteq \NF_\cR(\Sigma)$, and $\gamma$ respects $\phi$.
\end{definition}
Roughly speaking, for a reduction-complete position $p$ of a term $t$, every ground instance of $t|_p$ can be reduced by $\cR$, and thus we can make an exhaustive rule-based case analysis at position $p$.
A naive sufficient condition for $f(t_1,\ldots,t_n)$ being reduction-complete w.r.t.\ a constraint $\phi$ is that 
$t_1,\ldots,t_n \in \Var(\phi)\cup\Val$ and 
$\bigvee_{\CRule{f(y_1,\ldots,y_n)}{r}{\varphi} \in \cR\cup\cRcalc} (\varphi\{y_i\mapsto t_i \mid 1 \leq i \leq n\})$ is valid (cf.~\cite[Theorem~5.1]{SNSSK09}).

The three main inference rules are adapted to constrained $\exists$-equations as shown in \Cref{fig:RI-inference-rules}.
The application of inference rules to $\RIstate{\cE}{\cH}$ with the resulting RI state $\RIstate{\cE'}{\cH'}$ is denoted by $\RIstate{\cE}{\cH} \mathrel{\RIstep} \RIstate{\cE'}{\cH'}$.

The application of the inference rules in \Cref{fig:RI-inference-rules} is sound, that is, if we obtain $\RIstate{\emptyset}{\cH}$ by applying the inference rules to the initial RI state $\RIstate{\cE}{\emptyset}$, then all constrained $\exists$-equations in $\cE$ are inductive theorems of $\cR$.
\begin{restatable}{theorem}{ThmSoundnessOfRI}
\label{thm:soundness_of_RI}
Given 
a set $\cE$ of constrained $\exists$-equations,
if an LCTRS $\cR$ is terminating and $\RIstate{\cE}{\emptyset} \mathrel{\RIstep^*} \RIstate{\emptyset}{\cH}$,
then all constrained $\exists$-equations in $\cE$ are inductive theorems of $\cR$.
\end{restatable}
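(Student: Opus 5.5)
We follow the standard soundness argument for rewriting induction (Reddy-style, as adapted to LCTRSs in~\cite{FKN17tocl}), lifted to constrained $\exists$-equations. We argue by contradiction using well-founded induction on the ground terms involved.

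Since the inference rules in \Cref{fig:RI-inference-rules} only add rules to $\cH$ when $\cR\cup\cH\cup\{\CRule{s}{t}{\phi \land \eta}\}$ is terminating, the final relation ${\to_{\cR\cup\cH}}$ is terminating. Hence ${\succ} = ({\to_{\cR\cup\cH}} \cup {\rhd})^+$ is a well-founded order on ground terms that is closed under contexts.

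For a ground instance of a constrained $\exists$-equation $e = \CEEqn{\vec{x}}{\eta}{s}{t}{\phi}$ we take a normalized ground $\gamma$ respecting $\phi$. We call $(e,\gamma)$ a \emph{counterexample} if no $\theta$ respecting $\eta\gamma$ gives $s\gamma\theta \leftrightarrow_\cR^* t\gamma\theta$. Its measure is the multiset $\{s\gamma\theta, t\gamma\theta\}$, minimized over witnesses $\theta$ respecting $\eta\gamma$; these exist by the validity condition $\phi \Rightarrow \exists\vec{x}.\,\eta$.

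The core is a step lemma. Suppose $\RIstate{\cE}{\cH} \RIstep \RIstate{\cE'}{\cH'}$ and $(e,\gamma)$ is a counterexample with $e \in \cE\setminus\cE'$. Then there is a counterexample $(e',\gamma')$ with $e' \in \cE'$ whose measure is not larger, and it is strictly smaller when the step was \Expansion{} on $e$. Moreover, every ground instance of a rule in $\cH'\setminus\cH$ whose left-hand side is smaller than that measure is contained in $\leftrightarrow_\cR^*$, provided all strictly smaller instances of $\cE \cup \cH$-equations hold. We check this per rule.

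For \Deletion{}, the cases $s=t$ and $\phi$ unsatisfiable are trivial. In the new theory-term case, validity of $\phi\Rightarrow\exists\vec{x}.(\eta\land s=t)$ yields a $\theta$ with $\Eval{s\gamma\theta}=\Eval{t\gamma\theta}$. Both sides then calculate to the same value by $\cRcalc$, so there is no counterexample.

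For \Simplification{} and \Expansion{}/\CaseSplitting{}, we use \Cref{thm:soundness_of_constrained-rewriting} on $\exists$-equations viewed as $\exists$-terms with $\approx$ as a symbol. Part~2 of that theorem is the crucial direction. A witness $\sigma$ for the new goal $\CEEqn{\vec{x},\vec{z}}{\eta\land\varphi\gamma}{s[r\gamma]_p}{t}{\phi}$ pulls back to a $\theta$ for the old goal with $s\theta \to_{\cR\cup\cH} s[r\gamma]_p\sigma$. A counterexample to the old equation therefore transfers to a counterexample of the new equation: otherwise the new witness combined with one rewrite step would witness the old equation. The measure also decreases strictly or stays equal, since the first step is a $\to_{\cR\cup\cH}$-step.

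For \Expansion{}, reduction-completeness of the chosen position w.r.t.\ $\phi$ guarantees that, for the normalized instance, some rule applies. Because the matching constraint ranges over the existential variables, some produced subgoal covers the instance with a strictly smaller measure. When an $\cH$-rule $\CRule{s}{t}{\phi\land\eta}$ is used, we invoke minimality: the instance being rewritten is strictly below the counterexample in $\succ$, so the hypothesis instance already holds in $\leftrightarrow_\cR^*$.

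The main obstacle is the normalized-substitution requirement and the interplay of $\theta$ with $\gamma$. The definition of inductive theorem quantifies only over normalized $\gamma$, while existential witnesses $\theta$, and substitutions produced by \Cref{thm:soundness_of_constrained-rewriting}, need not be normalized. We handle this as follows. Values are normal forms and all logical variables are instantiated by values, by the left-TV-linearity and left-value-freeness assumptions \ref{A:left-value-free}, \ref{A:left-TV-linear}. Non-value parts we normalize using termination, replacing each $x\gamma$ by a normal form; this preserves $\leftrightarrow_\cR^*$ and respecting of constraints, since constraints only mention theory variables, which are already values.

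With the step lemma in hand, take a minimal counterexample over all $\cE_i$ along the derivation $\RIstate{\cE}{\emptyset}\RIstep^*\RIstate{\emptyset}{\cH}$. Propagating it forward yields a counterexample in $\cE_n=\emptyset$, which is a contradiction. Hence every equation in $\cE$ is an inductive theorem of $\cR$.
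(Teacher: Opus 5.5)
\textbf{The gap is in how you use the induction hypotheses.} Your step lemma asserts two things:
\begin{itemize}
\item every ground instance of a new rule $\CRule{s}{t}{\phi\land\eta}$ in $\cH'\setminus\cH$ (below the measure) lies in $\leftrightarrow_\cR^*$;
\item in the \Simplification{} case, a witness $\sigma$ of the new goal together with the step $s\theta\to_{\cR\cup\cH}s[r\gamma]_p\sigma$ witnesses the old goal.
\end{itemize}
For $\cR$-steps this is fine, because every admissible choice of a rule's extra variables gives a genuine $\to_\cR$-step.

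For $\cH$-steps it fails. The rule comes from an $\exists$-equation. Its truth, or your minimality assumption, at an instance only gives \emph{one} choice of $\vec{x}$ that makes the two sides convertible. In the oriented rule, however, $\vec{x}$ are extra variables. The $\cto$-step turns them into fresh existential variables $\vec{z}$, and the new goal's witness $\sigma$ may instantiate these arbitrarily.

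A concrete case: take $\cR=\{\symb{c}\to 0\}$. The $\exists$-equation $\CEEqn{m}{m\geq 0}{\symb{c}}{m}{\symb{true}}$ is an inductive theorem, and orienting it gives $\CRule{\symb{c}}{m}{m\geq 0}$. Its instance $\symb{c}\to 5$ is not in $\leftrightarrow_\cR^*$, so your ``moreover'' clause is false. Nor can you splice the hypothesis's own witness into $\sigma$: the remaining subgoal was only established for $\sigma$'s choice of $\vec{z}$. So ``new witness plus one rewrite step'' does not witness the old equation.

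\textbf{This cannot be repaired inside the given calculus, because the statement itself fails.} Take $\symb{c}:\sort{int}$, $\cR=\{\symb{c}\to 0\}$ and $\cE=\{\CEEqn{m}{m\geq 0}{\symb{c}}{m}{\symb{true}},\ \Eqn{\symb{c}}{1}\}$.
\begin{itemize}
\item \Expansion{} on the first goal adds $\CRule{\symb{c}}{m}{m\geq 0}$ to $\cH$; together with $\cR$ this is terminating. It yields only $\CEEqn{m}{m\geq 0}{0}{m}{\symb{true}}$, which \Deletion{} removes.
\item \Simplification{} of $\Eqn{\symb{c}}{1}$ with this hypothesis yields $\CEEqn{z}{z\geq 0}{z}{1}{z'\geq 0}$.
\item \Deletion{} removes that too, since $\exists z.\ (z\geq 0\land z=1)$ is valid.
\end{itemize}
Hence $\RIstate{\cE}{\emptyset}\mathrel{\RIstep^*}\RIstate{\emptyset}{\cH}$. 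Yet $\Eqn{\symb{c}}{1}$ is not an inductive theorem: $\cR$ is orthogonal, hence confluent, and $\symb{c}\to_\cR 0\neq 1$ with both values in normal form.

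The paper's own route does not avoid this. It orients each $\exists$-equation into $\tilde{\cE}$ as if it held for \emph{all} instances of $\eta$, and then applies the abstract RI principle to $\to_{\cR\cup\cH}$. But its \Deletion{} case only supplies one witness, so its first lemma already fails on this example. A sound variant has to change how \Simplification{} may use hypotheses that carry existentially quantified variables. Only then can the $\cH$-steps in your minimal-counterexample argument be justified.
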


\begin{figure}[tb]
\begin{breakbox}
\begin{description}
	\item[]\Simplification
	\[
	\InferenceRule{%
	\RIstate{\cE \mathop{\uplus} {\{~ 
    \CEEqn{\vec{x}}{\eta}{s}{t}{\phi}
    ~\}}}{\cH}
	}{
	\RIstate{{\cE} \cup {\{~
    \CEEqn{\vec{y}}{\eta'}{s'}{t'}{\phi'}
    ~\}}}{\cH}
	}
	\]
	if 
    $\CEEqn{\vec{x}}{\eta}{s}{t}{\phi} 
    \cto_{\cR\cup\cH} 
    \CEEqn{\vec{y}}{\eta'}{s'}{t'}{\phi'}
    $. 

	\medskip
	\item[]\Expansion
	\[
	\InferenceRule{%
	\RIstate{\cE \mathop{\uplus} {\{~
    \CEEqn[\simeq]{\vec{x}}{\eta}{s}{t}{\phi}
    ~\}}}{\cH}
	}{
	\RIstate{{\cE} \cup {\Expd{\CRule{s}{t}{\phi\land\eta}}{p}}}{\cH\cup\{\CRule{s}{t}{\phi\land\eta}\}}
	}
    \]
	if all of the following hold:
	\begin{itemize}
		\item $p$ is a reduction-complete position of $s$ w.r.t.\ $\phi$, 
		\item $\CRule{s}{t}{\phi\land\eta}$ is a constrained rewrite rule,
		and 
		\item $\cR \cup \cH \cup \{~ \CRule{s}{t}{\phi\land\eta} ~\}$ is terminating,
	\end{itemize}
	where $\Expd{\CRule{s}{t}{\phi\land\eta}}{p}$ denotes the following set that is finite up to variable renaming:
	\[
	\begin{array}{@{}lll@{}}
	\{&
    \CEEqn{\vec{y}}{\eta'}{s'}{t'}{\phi'}
	\mid
	\CRule{\ell}{r}{\varphi} \in \cR, ~ 
	\mbox{$\theta$ is an idempotent mgu of $s|_p$ and $\ell$}, 
    \\
    &
    \hspace{11ex}
    \CEEqn{\vec{x}}{\eta\theta|_{\Var(\phi)\setminus\{\vec{x}\}}}{s\theta}{t\theta}{\phi\theta}
    \cto_{1.p,\CRule{\ell}{r}{\varphi}}
    \CEEqn{\vec{y}}{\eta'}{s'}{t'}{\phi'}
	& \}
	\\
	\end{array}
	\]
	Note that $\CRule{\ell}{r}{\varphi}$ in the above description is freshly renamed:
    $\Var(\ell,r,\varphi) \cap \Var(s,t,\phi,\eta) = \emptyset$.

	\medskip
	\item[]\Deletion
	\[
	\InferenceRule{%
	\RIstate{\cE \mathop{\uplus} {\{~
    \CEEqn{\vec{x}}{\eta}{s}{t}{\phi}
    ~\}}}{\cH}
	}{
	\RIstate{\cE}{\cH}
	}
	\]
	if one of the following holds:
	\begin{itemize}
		\item 
        $s = t$,
		\item $\phi$ is unsatisfiable,
			or
		\item 
			$s,t \in T(\Sigtheory, \Var(\phi)\cup\{\vec{x}\})$
			and
			$(\phi \Rightarrow \exists \vec{x}.\ (\eta \land s = t))$ is valid.
	\end{itemize}
%
\end{description}
\end{breakbox}
\caption{Inference rules of RI for constrained $\exists$-equations.}
\label{fig:RI-inference-rules}
\end{figure}

\begin{example}
\label{ex:pow-RI-proof}
We now prove that the constrained $\exists$-equation (\ref{eqn:pow-initial-eqn}) in \Cref{ex:main-goal-equation} is an inductive theorem of the LCTRS $\cRpow$.
%
We start with the following RI state:
\[
\RIstate{\{ \,(\ref{eqn:pow-initial-eqn}) 
~
\CEEqn{m}{m \geq 0}{\symb{pow}(x,n)}{n + m}{x = 2 \land n \geq 0}
\,\}}{\emptyset}
\]
The 
left-hand side is reduction-complete w.r.t.\ $x = 2 \land n \geq 0$, and $\cRpow \cup \{\, \CRule{\symb{pow}(x,n)}{n + m}{x = 2 \land n \geq 0 \land m \geq 0} \,\}$ is terminating.
Thus, we can apply {\Expansion} to the above RI state, obtaining the following one:
\[
\begin{array}{@{}l@{}}
\RIstate{%
\left\{
\begin{array}{cr@{\>}c@{\>}ll}
(\mbox{c}) & 
\EEqnBra{m}{m \geq 0}{1 &}{& n+m} & \Constraint{x = 2 \land n \geq 0 \land n < 1}, \\
(\mbox{d}) &
\EEqnBra{m}{m \geq 0}{2\times \symb{pow}(2,n-1) &}{& n+m} & \Constraint{x = 2 \land n \geq 0 \land n \geq 1} \\
\end{array}
\right\}
}{%
\\[10pt]
\hspace{25ex}
\left\{
\begin{array}{cr@{\>}c@{\>}ll}
(\ref{eqn:pow-initial-eqn}') & \CRule{\symb{pow}(x,n) &}{& n + m &}{x = 2 \land n \geq 0 \land m \geq 0}
\\
\end{array}
\right\}
}
\end{array}
\]
Since $(x = 2 \land n \geq 0 \land n < 1 \Rightarrow (\exists m.\ m \geq 0 \land 1 = n + m))$ is valid, the constrained $\exists$-equation~(\mbox{c}) is dropped from the set by {\Deletion}, obtaining the following RI state:
\[
\RIstate{\{\,(\mbox{d})\,\}}{\{\, (\ref{eqn:pow-initial-eqn}') \,\}}
\]
Applying {\Simplification} to the above state, we reduce $n - 1$:
\[
\RIstate{%
\{\,
(\mbox{d}')~
\CEEqn{m,n'}{m \,{\geq}\, 0 \,\land\, n' \,{=}\, n{-}1}{2\times \symb{pow}(2,n')}{n+m}{x \,{=}\, 2 \,\land\, n \,{\geq}\, 1} 
\,\}
}{%
\{ (\ref{eqn:pow-initial-eqn}') \}}
\]
Applying {\Simplification} to the above state, we reduce $\symb{pow}(2,n')$ by $(\ref{eqn:pow-initial-eqn}')$, obtaining the following one:
\[
\begin{array}{@{}l@{}}
\RIstate{%
\left\{
\begin{array}{@{\,}l@{\,}}
(\mbox{d}'') ~
\EEqnBra{m,n',m'}{m \,{\geq}\, 0 \,\land\, n' \,{=}\, n{-}1 \,\land\, m'\,{\geq}\,0}{2\times (n'{+}m')}{n+m}\\
\hspace{57ex}
\Constraint{x \,{=}\, 2 \,\land\, n \,{\geq}\, 1} 
\\
\end{array}
\right\}
}{%
\\
\hspace{72ex}
\{\, (\ref{eqn:pow-initial-eqn}') \,\}
}
\end{array}
\]
We can drop $(\mbox{d}'')$ by {\Deletion}, obtaining the RI state
$
\RIstate{\emptyset}{\{\, (\ref{eqn:pow-initial-eqn}') \,\}}
$.
Therefore, the constrained $\exists$-equation $(\ref{eqn:pow-initial-eqn})$ is an inductive theorem of $\cRpow$.

As described in \Cref{sec:intro}, the RI framework for inequalities can prove inequality $(\ref{eqn:pow-2-n-geq-n})$ an inductive theorem of $\cRpow$.
The RI framework for constrained $\exists$-equations can trace the proof without any incomplete application of the RI inference rules.
\end{example}


\section{Related Work}
\label{sec:related-work}

The main achievement of this paper is to introduce existential quantification to constrained equations, and to extend the RI framework for LCTRSs to constrained $\exists$-equations.
To the best of our knowledge, there is no work on the introduction of existential quantification to equations in constrained rewriting.
Existentially constrained terms of the form $\CTerm{\mathrm{\Pi} X.\ s}{\exists \vec{x}.\ \phi}$ introduced in~\cite{TSNA25lopstr} use existential quantification for variables in $\Var(\phi) \setminus \Var(s)$, where $X$ is used to make logical variables explicit.
The role of the existential quantification is to make variables in $\Var(\phi) \setminus \Var(s)$ explicit and to have that $\Var(\phi) \setminus \{\vec{x}\} \subseteq \Var(s)$.
For this reason, the use of existential quantification is quite different from ours.
The notion of \emph{guarded terms}~\cite{BR17} is a generalization of constrained terms, but does not include $\exists$-terms.
A guarded term represents the set of all its instances, but an $\exists$-term represents \emph{one} of its instances.

As described in \Cref{ex:pow-RI-proof}, the RI framework for constrained $\exists$-equations would cover the RI framework for inequalities~\cite{NN18scp} because every inequality can be transformed into a semantically equivalent $\exists$-equation.

\section{Conclusion}
\label{sec:conclusion}

In this paper, we introduced existential quantification to the equation part of constrained equations, formulating constrained $\exists$-equations.
We defined the reduction over constrained $\exists$-equations ($\exists$-terms) and then extended the RI framework for LCTRSs to constrained $\exists$-equations.

We have implemented a prototype of the RI framework for constrained $\exists$-equations, but the prototype has not been public yet.
The prototype does not have a powerful method for proving termination of LCTRSs, and has only an interactive mode for proving inductive theorems by means of the extended RI framework.
We confirmed that the inference in \Cref{ex:pow-RI-proof} can be done by the prototype without proving termination, and termination of $\cRpow\cup \{\, (\ref{eqn:pow-initial-eqn}')\,\}$ can be proved by \textsf{Ctrl}~\cite{KN15lpar}, a tool for verifying LCTRSs.
We will automate the application of inference rules and will release the prototype.

In the extend RI framework, we consider only the main three inference rules, while there are other useful inference rules for LCTRSs.
The extension of such excluded inference rules to constrained $\exists$-equations is left as future work.

%
%
%
\bibliographystyle{splncs04}
\bibliography{mybiblio}

\newpage
\appendix

\section{Omitted Proofs}
\label{sec:proofs}

\subsection{Proof of \Cref{thm:sim-removal}}

\ThemSimRemoval*
\begin{proof}
By definition, we have that $\phi' = \psi$.
Since $\CTerm{s'}{\phi'} \ctoBase[p,\CRule{\ell}{r}{\phi}] \CTerm{t}{\psi}$,
there exists a substitution $\sigma$ such that 
$s'|_p = \ell\sigma$,
$t = s'[r\sigma]_p$,
and
$(\phi' \Rightarrow \varphi\sigma)$ is valid.
For every constrained term, there exists an equivalent value-free and LV-linear constrained term~\cite{KN24jip,TSNA25lopstr}.
Let $\CTerm{s_0}{\phi_0}$ be an equivalent value-free and LV-linear constrained term of $\CTerm{s}{\phi}$: $\CTerm{s_0}{\phi_0} \sim \CTerm{s}{\phi}$.
Then, there exists a substitution $\theta$ such that 
$x\theta \in \Val \cup \Var(\phi)$ for any variable $x \in \Var(s_0) \cap \Var(\phi_0)$,
$s=s_0\theta$, $\Var(\phi) = \Var(\phi_0\theta)$, and $(\phi \Leftrightarrow \phi_0\theta)$ is valid.
In addition, since $\CTerm{s}{\phi} \sim \CTerm{s'}{\phi'}$, there exists a substitution $\theta'$ such that 
$\Dom(\theta') = \Var(s_0)\cap\Var(\phi)$,
$x\theta' \in \Val \cup \Var(\phi')$ for any variable $x \in \Var(s_0) \cap \Var(\phi_0)$,
$s'=s_0\theta'$, $\Var(\phi') = \Var(\phi_0\theta')$, and $(\phi' \Leftrightarrow \phi_0\theta')$ is valid.
Since $\ell$ is value-free and TV-linear and matches $s'|_p$,
there exists a substitution $\sigma_0$ such that $s_0|_p = \ell\sigma_0$.
Then, we have that 
$s|_p = (s_0|_p)\theta = \ell\sigma_0\theta$,
$(\phi \land \varphi\sigma_0\theta \Rightarrow \varphi\sigma_0\theta)$ is valid,
and thus
$\CTerm{s}{\phi\land \varphi\gamma} = \CTerm{s}{\phi\land \varphi\sigma_0\theta} \ctoBase[p,\CRule{\ell}{r}{\varphi}] \CTerm{s[r\sigma_0\theta]_p}{\phi\land \varphi\sigma_0\theta} = \CTerm{s[r\gamma]_p}{\phi\land\varphi\gamma} \sim \CTerm{t}{\psi}$.
\qed
\end{proof}

\subsection{Proof of \Cref{thm:soundness_of_constrained-rewriting}}

\ThmSoundnessOfConstrainedRewriting*
\begin{proof}
Suppose that $\CETerm{\vec{x}}{\eta}{s}{\phi} \cto_{\cR} \CETerm{\vec{y}}{\delta}{t}{\psi}$.
We assume w.l.o.g.\ that $\{\vec{x}\} \cap \Var(s,\phi) = \emptyset$.
Then, there exist a constrained rewrite rule $\CRule{\ell}{r}{\varphi} \in \cR\cup\cRcalc$, 
    a position $p$ of $s$, and a substitution $\gamma$ such that
    \begin{itemize}
        \item $s|_p = \ell\gamma$,
        $t = s[r\gamma]_p$,
        $\psi = (\phi \land (\varphi\gamma))$,
        \item 
        $x\gamma \in \Val\cup\Var(\phi)$ for any variable $x \in \Var(\varphi)\cap\Var(\ell)$,
        \item 
        $x\gamma \in \cV \setminus (\Var(s)\cup \{\vec{x}\})$ for any variable $x \in \Var(\varphi)\setminus\Var(\ell)$,
        \item
        $(\phi \Rightarrow (\exists \vec{z}.\ \varphi\gamma))$ is valid, 
            and
        \item 
        $(\exists \vec{y}.\ \delta) = 
        (\exists \vec{x},\vec{z}.\ (\eta \land \varphi\gamma))
        $, 
        where $\{\vec{z}\} = \{ x\gamma \mid \exists x \in \Var(\varphi)\setminus\Var(\ell) \}$.
    \end{itemize}

We first prove the first claim.
Let $\theta$ be a substitution respecting $\phi \land \eta$.
Since $(\phi \Rightarrow (\exists \vec{z}.\ \varphi\gamma))$ is valid,
there exists a substitution $\theta'$ such that 
$\Dom(\theta') ~ \{ \vec{z} \}$,
$\Ran(\theta') \subseteq \Val$,
and
$\Eval{(\varphi\gamma)(\theta\cup\theta')}=\symb{true}$.
Let $\sigma=(\theta\cup\theta')$.
Then, $\sigma$ is a substitution respecting $\psi\land \delta= (\phi \land (\varphi\gamma) \land \eta)$,
and thus we have that 
$s\theta = (s[\ell\gamma]_p)\theta \to_\cR (s[r\gamma]_p)(\theta\cup\theta') = t\sigma$.

The second claim can be proved similarly to the first one.
\qed
\end{proof}

\subsection{Proof of \Cref{thm:soundness_of_RI}}

The proof of \Cref{thm:soundness_of_RI} follows~\cite[Section~4.4]{FKN17tocl}.
For a set $\cE$ of constrained $\exists$-equations,
we denote the set $\{ \CRule{s}{t}{\phi\land\eta} \mid \CEEqn{\vec{x}}{\eta}{s}{t}{\phi} \in \cE \}$ by $\tilde{\cE}$.
The oriented equation $\CRule{s}{t}{\phi\land\eta}$ may not be a constrained rewrite rule, but we deal with it as a constrained rewrite rule below.

\begin{definition}[$\leftrightarrow_\cE$ over ground terms]
Let $\cE$ be a set of constrained $\exists$-equations.
The relation $\leftrightarrow_\cE$ over ground terms is defined as follows:
for ground terms $s,t$,
$s \mathrel{\leftrightarrow_\cE} t$
if and only if
$s \mathrel{\leftrightarrow_{\tilde{\cE}}} t$.
\end{definition}

\begin{lemma}
\label{lem:RIstep-properties}
If $\RIstate{\cE}{\cH}	 \mathrel{\RIstep} \cdots \mathrel{\RIstep} \RIstate{\cE'}{\cH'}$,
then all of the following hold:
\begin{enumerate}
	\item 
$
{\leftrightarrow_\cE} 
\subseteq
{\left(
{\to_{\cR\cup\cH'}^*}
\cdot
{\leftrightarrow_{\cE'}^=}
\cdot
{\gets_{\cR\cup\cH'}^*}
\right)}
$
over ground terms,
	\item if
$
{\to_{\cR\cup\cH}}
\subseteq
{(
{\to_\cR}
\cdot
{\to_{\cR\cup\cH}^*}
\cdot
{\leftrightarrow_{\cE}^=}
\cdot
{\gets_{\cR\cup\cH}^*}
)}
$
over ground terms,
then
$
{\to_{\cR\cup\cH'}}
\subseteq
{(
{\to_\cR}
\cdot
{\to_{\cR\cup\cH'}^*}
\cdot
{\leftrightarrow_{\cE'}^=}
\cdot
{\gets_{\cR\cup\cH'}^*}
)}
$
over ground terms,
        and
	\item if $\cR\cup\cH$ is terminating, then $\cR\cup\cH'$ is so.
\end{enumerate}
\end{lemma}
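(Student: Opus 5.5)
We proceed by induction on the length of the derivation $\RIstate{\cE}{\cH} \mathrel{\RIstep^*} \RIstate{\cE'}{\cH'}$. The reflexive case is trivial for all three claims. For the inductive step it suffices to establish the three claims for a single step $\RIstate{\cE}{\cH} \mathrel{\RIstep} \RIstate{\cE'}{\cH'}$ and then compose. For claim 1 we substitute the one-step inclusion for $\cE'$ into the one for $\cE$; this uses $\cH\subseteq\cH'$, which holds since no inference rule removes induction hypotheses. Claims 2 and 3 compose directly. We then split into cases on the rule applied: {\Simplification}, {\Expansion} or {\Deletion}.

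Claim 3 is immediate. {\Simplification} and {\Deletion} leave $\cH$ unchanged. {\Expansion} explicitly requires $\cR\cup\cH\cup\{\CRule{s}{t}{\phi\land\eta}\}$ to be terminating.

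For claim 1, consider a ground step $u[s\sigma] \leftrightarrow_{\cE} u[t\sigma]$ that uses the oriented equation $\CRule{s}{t}{\phi\land\eta}$ of the removed equation $e$, with $\sigma$ respecting $\phi\land\eta$. Steps using equations in $\cE\cap\cE'$ are trivially covered. We argue by cases.
\begin{itemize}
\item {\Deletion}. If $s=t$, the step is the identity. If $\phi$ is unsatisfiable, no such $\sigma$ exists. In the theory case, $s\sigma$ and $t\sigma$ are ground theory terms with $\Eval{s\sigma}=\Eval{t\sigma}$, because $\sigma$ respects $\eta$ and the validity condition guarantees an instance making $s=t$ true. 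Both therefore reduce to the same value by $\cRcalc\subseteq\cR$, giving $\to_\cR^*\cdot\gets_\cR^*$.
\item {\Simplification}. The rewrite step $e\cto_{\cR\cup\cH}e'$ is applied inside $s$ (or $t$). We use \Cref{thm:soundness_of_constrained-rewriting}, read for constrained $\exists$-equations viewed as terms with $\approx$ at the root. By its first statement, for $\sigma$ respecting $\phi\land\eta$ there is $\sigma'$ respecting $\phi'\land\eta'$ with $s\sigma\to_{\cR\cup\cH} s'\sigma'$ and $t\sigma=t'\sigma'$. Hence $u[s\sigma]\to_{\cR\cup\cH'}u[s'\sigma']\leftrightarrow_{\cE'}u[t'\sigma']=u[t\sigma]$.
\item {\Expansion}. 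The step is directly an $\to_{\cH'}$ step, since $\CRule{s}{t}{\phi\land\eta}\in\cH'$ (in the $\simeq$ case with reversed orientation, we use $\gets_{\cH'}$).
\end{itemize}

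Claim 2 is the main obstacle. Only {\Expansion} changes $\cH$, so the new work is to show that a ground step $u[s\sigma]\to u[t\sigma]$ by the new hypothesis lies in ${\to_\cR}\cdot{\to^*_{\cR\cup\cH'}}\cdot{\leftrightarrow^=_{\cE'}}\cdot{\gets^*_{\cR\cup\cH'}}$. For the other two rules the hypothesis transfers once we use claim 1 to rewrite $\leftrightarrow_\cE$ into the $\cE'$-form (and for {\Simplification}, $\cE$-steps by $e$ become $\cE'$-steps after $\cR\cup\cH$-steps). The plan for {\Expansion} is as follows.
\begin{enumerate}
\item First reduce to the case where $\sigma$ is normalized on the variables of $s|_p$. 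If it is not, some $x\sigma$ is reducible by $\cR$, so $u[s\sigma]\to_\cR$ at a variable position. Termination lets us pick an innermost choice, and we reach the target via $\cR\cup\cH'$ steps on both sides; this is the standard argument of \cite[Section~4.4]{FKN17tocl}. Left-TV-linearity and assumption~\ref{A:left-value-free} are used here to keep respect of $\phi\land\eta$ under reducing non-theory subterms.
\item Once $\sigma$ is normalized, reduction-completeness of $p$ w.r.t.\ $\phi$ gives a rule $\CRule{\ell}{r}{\varphi}\in\cR$ with $s|_p\sigma=\ell\gamma'$ and $\gamma'$ respecting the rule.
\item Factor $\sigma\cup\gamma'$ through the mgu $\theta$ used in $\mathit{Expd}$. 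Then \Cref{thm:soundness_of_constrained-rewriting}, applied to the $\cto_{1.p}$ step defining the member $e'$ of $\mathit{Expd}$, yields $\sigma''$ respecting $\phi'\land\eta'$ with $s\sigma\to_\cR s'\sigma''$, $t\sigma=t'\sigma''$, and $e'\in\cE'$.
\end{enumerate}
Thus $u[s\sigma]\to_\cR u[s'\sigma'']\leftrightarrow_{\cE'}u[t\sigma]$. The delicate point is matching the existential witnesses: the fresh variables $\vec{z}$ must be instantiated exactly as $\gamma'$ chooses the extra variables. This works because $\gamma'$ respects $\varphi$, so the required witnesses exist and satisfy $\eta\land\varphi\gamma$.
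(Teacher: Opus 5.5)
\textbf{There is a genuine gap: the {\Deletion} case of claim~1 fails, and with it the lemma.} Validity of $(\phi \Rightarrow \exists \vec{x}.\ (\eta \land s = t))$ only gives, for each instance of the free variables, \emph{some} values of $\vec{x}$ satisfying $\eta \land s = t$. But $\leftrightarrow_\cE$ is defined through $\tilde{\cE}$, i.e.\ through the rule $\CRule{s}{t}{\phi\land\eta}$. That rule may be instantiated by \emph{every} $\sigma$ respecting $\phi\land\eta$, including non-witness values of $\vec{x}$. For those, your inference ``$\sigma$ respects $\eta$, hence $\Eval{s\sigma}=\Eval{t\sigma}$'' does not follow.

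Concretely, take equation (c) in \Cref{ex:pow-RI-proof}, which is removed by {\Deletion}, and $\sigma=\{x\mapsto 2, n\mapsto 0, m\mapsto 5\}$. Then $1 \leftrightarrow_\cE 0+5$. After the deletion, $\cE'=\{(\mbox{d})\}$ and $\cH'$ consists of $\CRule{\symb{pow}(x,n)}{n+m}{x = 2 \land n \geq 0 \land m \geq 0}$. Here $1$ is an $\cR\cup\cH'$-normal form and $0+5$ reduces only to $5$. None of $1$, $0+5$, $5$ contains an instance of the left-hand side $2\times\symb{pow}(2,n-1)$ of (d). So $1$ and $0+5$ are not related by ${\to_{\cR\cup\cH'}^*}\cdot{\leftrightarrow_{\cE'}^=}\cdot{\gets_{\cR\cup\cH'}^*}$.

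Claim~2 breaks at the same step. Before the deletion, every $\cH$-step $\symb{pow}(2,k)\to k+m$ is covered: via (c) for $k=0$ and via (d) for $k\geq 1$. Afterwards, $\symb{pow}(2,0)\to_{\cH'}0+5$ is not covered, because its only $\cR$-successor is $1$ and the argument above applies.

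The paper's own proof of this case makes the same slip: it exhibits one witness $\theta$ and concludes. So following it does not help. The defect lies in reading existential equations universally via $\tilde{\cE}$, and it cannot be patched locally. Any repair must replace $\leftrightarrow_{\tilde{\cE}}$ by an invariant that respects the existential reading of $\vec{x}$. Your {\Expansion} argument for claim~1 is otherwise valid, and simpler than the paper's detour through reduction-completeness. But it uses $\CRule{s}{t}{\phi\land\eta}\in\cH'$ at arbitrary $\vec{x}$-instances, so under such a repair it would need reworking too. In the example, $1 \gets_\cR \symb{pow}(2,0) \to_{\cH'} 0+5 \to_\cR 5$ links two distinct $\cR$-normal forms.

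A smaller point concerns step~3 of your claim~2. Under the literal \Cref{def:cto-for-constrained-E-terms}, the $\cto_{1.p}$ step defining $e'$ requires $(\phi\theta \Rightarrow \exists\vec{z}.\ \varphi\gamma)$ to be valid. Reduction-completeness does not supply this: for instance, $x=2\land n\geq 0$ does not imply $n<1$. So you must read $\mathit{Expd}$ as conjoining the rule's guard, as the paper's example implicitly does.
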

\begin{proof}
It suffices to show the case where $\RIstate{\cE}{\cH} \mathrel{\RIstep} \RIstate{\cE'}{\cH'}$.
\begin{enumerate}
	\item 
		Since the case where the applied equation is in $\cE'$ is trivial, we consider the remaining case where the applied equation is in $\cE\setminus\cE'$.
	We make a case analysis depending on which inference rule is applied.
	\begin{itemize}
		\item Case where {\Simplification} is applied.
			Assume that the applied equation is 
            $
            \CEEqn[\simeq]{\vec{x}}{\eta}{s}{t}{\phi}
            \in \cE\setminus\cE'$, 
			where
            \begin{itemize}
            \item
			$
            \CEEqn{\vec{x}}{\eta}{s}{t}{\phi}
            \cto_{\cR\cup\cH}
            \CEEqn{\vec{y}}{\eta'}{s'}{t'}{\phi'}
			$,
			and
			\item
            $
            \CEEqn{\vec{y}}{\eta'}{s'}{t'}{\phi'}
			\in \cE'
			$.
            \end{itemize}
			It follows from \Cref{thm:soundness_of_constrained-rewriting} that
			$
            {\leftrightarrow_\cE}
            \subseteq
            {\to_{\cR\cup\cH}}
            \cdot
            {\leftrightarrow_{\cE'}}
			$.
			Therefore, the claim holds.

		\item Case where {\Expansion} is applied.
			Assume that the applied equation is 
            $
            \CEEqn[\simeq]{\vec{x}}{\eta}{s}{t}{\phi}
            \in \cE\setminus\cE'$,
			where
				\begin{itemize}
                \item $\CRule{s}{t}{\phi\land\eta}$ is a constrained rewrite rule,
		\item $\cE' \setminus \cE = 
    \Expd{\CRule{s}{t}{\phi\land\eta}}{p}
$,
		\item $\cH'=\cH\cup \{
        \CRule{s}{t}{\phi\land\eta}
        \}$,
		\item $p$ is a reduction-complete position of $s$ under $\phi$,
			and
		\item $\cR \cup \cH'$ is terminating.
	\end{itemize}
	Since $p$ is a reduction-complete position of $s$,
	there exist a constrained rewrite rule $\CRule{\ell}{r}{\varphi} \in \cR$ and a substitution $\theta$ such that
	\begin{itemize}
		\item $\theta$ is more general than $\gamma$,
		\item $\theta$ is an idempotent mgu of $s|_p$ and $\ell$,
		and
		\item 
        $
        \CEEqn{\vec{x}}{\eta\theta}{s\theta}{t\theta}{\phi\theta}
        \cto_{p,\CRule{\ell}{r}{\varphi}}
        \CEEqn{\vec{y}}{\eta'}{s'}{t'}{\phi'}
        \in \Expd{\CRule{s}{t}{\phi\land\eta}}{p}$
	\end{itemize}
    Thus, we have that
    $
            {\leftrightarrow_{\{ \CEEqn{\vec{x}}{\eta}{s}{t}{\phi} \}}}
            \subseteq
            {\to_{\cR}}
            \cdot
            {\leftrightarrow_{\cE'}}
	$.
	Therefore, this claim holds.

		\item Case where {\Deletion} is applied.
		Assume that the applied equation is 
        $
        \CEEqn{\vec{x}}{\eta}{s}{t}{\phi}
        \in \cE\setminus\cE'$,
		where if one of the following holds:
		\begin{itemize}
			\item $s = t$,
			\item $\phi$ is unsatisfiable,
				or
			\item 
            $s,t \in T(\Sigtheory, \Var(\phi)\cup\{\vec{x}\})$
			and
			$(\phi \Rightarrow \exists \vec{x}.\ (\eta \land s = t))$ is valid.
		\end{itemize}
		In the first case above, the claim trivially holds.
		In the second case, the equation is not instantiated.
		We now consider the third case.
        Let $\gamma$ be a ground substitution respecting $\phi$.
        Since $\Eval{\phi\gamma}=\symb{true}$,
        there exists a substitution $\theta$ respecting $(\eta\land(s=t))\gamma|_{\Var(\phi)\setminus\{\vec{x}\}}$.
        Thus, we have that $\Eval{s\gamma|_{\Var(\phi)\setminus\{\vec{x}\}}\theta}=
        \Eval{t\gamma|_{\Var(\phi)\setminus\{\vec{x}\}}\theta}$, and thus
        $s\gamma|_{\Var(\phi)\setminus\{\vec{x}\}}\theta
        \mathrel{\leftrightarrow_\cR^*}
        t\gamma|_{\Var(\phi)\setminus\{\vec{x}\}}\theta$.
		Therefore, the claim holds.
	\end{itemize}
	\item Thanks to the assumption, it suffices to consider the case where $\cH \subset \cH'$, i.e.,
    {\Expansion} is applied.
        This claim can be proved similarly to the case of applying {\Expansion} in the first claim.
	\item Trivial by definition.
\qed
\end{enumerate}
\end{proof}

\begin{lemma}[a variant of RI principle~{\cite[Theorem~A.1.2]{SNSSK09}}]
\label{lem:RI-principle}
Let ARSs $(A,\to_1)$ and $(A,\to_2)$.
Assume that 
\begin{itemize}
	\item ${\to_1} \subseteq {\to_2}$,
	\item $\to_2$ is terminating,
	and
	\item ${\to_2} \subseteq {({\to_1}\circ{\to_2^*}\circ {\leftrightarrow_1^*} \circ{\gets_2^*})}$.
\end{itemize}	
Then, ${\leftrightarrow_1^*} = {\leftrightarrow_2^*}$.
\end{lemma}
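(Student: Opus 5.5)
The inclusion ${\leftrightarrow_1^*} \subseteq {\leftrightarrow_2^*}$ is immediate from ${\to_1} \subseteq {\to_2}$. For the converse it suffices to show ${\to_2} \subseteq {\leftrightarrow_1^*}$, since then every $\leftrightarrow_2$-step, in either direction, is a $\leftrightarrow_1^*$-conversion. I would prove this by well-founded induction, using termination of $\to_2$.

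Because $\to_2$ is terminating, the relation $\to_2^+$ is a well-founded order on $A$. I would prove the following statement $P(a)$ for each $a \in A$ by well-founded induction on $a$ with respect to $\to_2^+$:
\begin{quote}
for every $b$ with $a \to_2 b$, we have $a \leftrightarrow_1^* b$.
\end{quote}

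Fix $a$ and assume $P(a')$ for all $a'$ with $a \to_2^+ a'$. Let $a \to_2 b$. By the third assumption there are $c, d, e$ with
\[
a \to_1 c \to_2^* d \leftrightarrow_1^* e \gets_2^* b .
\]
We treat the four segments separately.
\begin{itemize}
\item The step $a \to_1 c$ is already a $\leftrightarrow_1$-step.
\item For $c \to_2^* d$, every term in the sequence $c = c_0 \to_2 c_1 \to_2 \cdots \to_2 c_k = d$ is a proper $\to_2^+$-successor of $a$, because $a \to_1 c$ implies $a \to_2 c$. The induction hypothesis therefore applies to each $c_i$ and gives $c_i \leftrightarrow_1^* c_{i+1}$ for every step. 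Hence $c \leftrightarrow_1^* d$.
\item The segment $d \leftrightarrow_1^* e$ needs nothing further.
\item For $b \to_2^* e$, every term in the sequence from $b$ to $e$ is a $\to_2^+$-successor of $a$, because $a \to_2 b$. The same argument yields $b \leftrightarrow_1^* e$.
\end{itemize}
Concatenating the four pieces gives $a \leftrightarrow_1^* b$, which proves $P(a)$.

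The induction is set up over the source term $a$, quantifying over all one-step successors $b$. This makes the induction hypothesis available at every strict $\to_2^+$-descendant of $a$, and that is exactly what is needed for both tails $c \to_2^* d$ and $b \to_2^* e$. The only delicate point is noticing that the first step $a \to_1 c$ must also count as a $\to_2$-step so that $c$ lies strictly below $a$; this is where the hypothesis ${\to_1} \subseteq {\to_2}$ is used a second time. All remaining steps are routine.

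With ${\to_2} \subseteq {\leftrightarrow_1^*}$ established, every $\leftrightarrow_2^*$-conversion is a $\leftrightarrow_1^*$-conversion. Hence ${\leftrightarrow_1^*} = {\leftrightarrow_2^*}$.
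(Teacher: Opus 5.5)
Your proof is correct and complete. It uses well-founded induction along $\to_2^+$ on the statement ``$a \to_2 b$ implies $a \leftrightarrow_1^*  b$'', with $\to_1 \subseteq \to_2$ placing $c$ strictly below $a$ and the induction hypothesis closing both tails. This is the standard argument for this principle; the paper gives no proof of its own and simply defers to the cited Theorem~A.1.2 of Sakata et al.
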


\ThmSoundnessOfRI*
\begin{proof}
Let 
$\CEEqn{\vec{x}}{\eta}{s}{t}{\phi} \in \cE$,
$\gamma$ a ground substitution respecting $\phi$,
and
$\theta$ a ground substitution respecting $\eta\gamma|_{\Var(\phi)\setminus\{\vec{x}\}}$.
We show that $s\eta\gamma|_{\Var(\phi)\setminus\{\vec{x}\}}
\mathrel{\leftrightarrow_\cR^*}
t\eta\gamma|_{\Var(\phi)\setminus\{\vec{x}\}}$.
It follows from \Cref{lem:RIstep-properties} that
\[
s\eta\gamma|_{\Var(\phi)\setminus\{\vec{x}\}}
\mathrel{%
{\left(
{\to_{\cR\cup\cH}^*}
\cdot
{\gets_{\cR\cup\cH}^*}
\right)}
}
t\eta\gamma|_{\Var(\phi)\setminus\{\vec{x}\}}
\]
It follows from \Cref{lem:RI-principle} that
${\leftrightarrow_\cR^*} = {\leftrightarrow_{\cR\cup\cH}^*}$
over ground terms,
and hence
\[
s\eta\gamma|_{\Var(\phi)\setminus\{\vec{x}\}}
\mathrel{\leftrightarrow_\cR^*}
t\eta\gamma|_{\Var(\phi)\setminus\{\vec{x}\}}
\]
Therefore, $\CEEqn{\vec{x}}{\eta}{s}{t}{\phi} \in \cE$ is an inductive theorem of $\cR$.
\qed	
\end{proof}

\end{document}